\documentclass[
submission
]{dmtcs-episciences}


\usepackage[utf8]{inputenc}
\usepackage{subfigure}
\newtheorem{theo}{Theorem}[section]

\newtheorem{lemma}[theo]{Lemma}

%

\usepackage[round]{natbib}

\author{Stefan Bard\affiliationmark{1}
  \and Thomas Bellitto\affiliationmark{2}
  \and Christopher Duffy\affiliationmark{3}
  \and Gary MacGillivray\affiliationmark{1}\thanks{Research supported by NSERC.}
  \and Feiran Yang\affiliationmark{1}}
\title{Complexity of locally-injective homomorphisms to tournaments}
\affiliation{
  Department of Mathematics and Statistics, University of Victoria, CANADA\\
  Deptartment of Mathematics and Computer Science, University of Southern Denmark, DENMARK\\
  Department of Mathematics and Statistics, University of Saskatchewan, CANADA}
\keywords{Complexity, Graph homomorphism, Oriented graph, Locally-injective homomorphism}
\received{2017-10-27}

\revised{2018-8-23}

\accepted{2018-9-3}

\begin{document}
\publicationdetails{20}{2018}{2}{4}{4021}
\maketitle
\begin{abstract}
For oriented graphs $G$ and $H$, a homomorphism $f: G \rightarrow H$ is locally-injective if, for every $v \in V(G)$, it is injective when restricted to some combination of the in-neighbourhood and out-neighbourhood of $v$. Two of the possible definitions of local-injectivity are examined. In each case it is shown that the associated homomorphism problem is NP-complete when $H$ is a reflexive tournament on three or more vertices with a loop at every vertex, and solvable in polynomial time when $H$ is a reflexive tournament on two or fewer vertices.
\end{abstract}

\section{Introduction}

Given two graphs $G=(V_G,E_G)$ and $H=(V_H,E_H)$, a \emph{homomorphism} from $G$ to $H$ is a function $f:V_G\to V_H$ such that for every $uv \in E_G,$ $f(u)f(v) \in E_H$. 
A homomorphism from $G$ to $H$ is referred to as an \emph{$H$-colouring} of $G$ and the vertices of $H$ are regarded as \emph{colours}. 
The graph $H$ is called the \emph{target} of the homomorphism.
These definitions extend to directed graphs by requiring that the mapping  must preserve the existence as well as the direction of each arc.

A \emph{locally-injective homomorphism} $f$ from $G$ to $H$ is a homomorphism from $G$ to $H$ such that for every $v \in V$ the restriction of $f$ to $N(v)$ (or possibly $N[v] = N(v) \cup \{v\}$) is injective.
The complexity of locally-injective homomorphisms for undirected graphs has been examined by a variety of authors and in a variety of contexts \cite{inj7, inj6, inj1, inj2, inj3, inj4, inj5, inj9}.
Locally-injective homomorphisms of graphs find application in a range of areas including  bio-informatics \cite{appli1, appli2, appli3} and coding theory \cite{inj5}. 

Here we consider locally-injective homomorphisms of \emph{oriented graphs}, that is, directed graphs in which any two vertices are joined by at most one arc. 
Given a vertex $v$, an arc from $v$ to $v$ is called a \emph{loop}.
A directed graph with a loop at every vertex is called \emph{reflexive}; a directed graph with no loops is called \emph{irreflexive}. For a vertex $v$ of a directed graph $G$, let $N^-(v)$, respectively $N^+(v)$, denote the set of vertices $u$ such that $uv$, respectively $vu$, is an arc of $G$. Note that if there is a loop at $v$, then $v \in N^-(v)$ and $v \in N^+(v)$.

To define locally-injective homomorphisms of oriented graphs, one must choose the neighbourhood(s) on which the homomorphism must be injective.  
Up to symmetry, there are four natural choices:
\begin{enumerate}
	\item $N^-(v)$.
	\item $N^+(v)$ and also $N^-(v)$.
	\item $N^+(v)\cup N^-(v)$.
	\item $N^+[v]\cup N^-[v] = N^+(v)\cup N^-(v) \cup \{v\}$.
\end{enumerate}

For irreflexive targets, $(2)$, $(3)$ and $(4)$ are equivalent. 
Under $(4)$, adjacent vertices must always be assigned different colours, and hence whether or not the target contains loops is irrelevant. 
Therefore, we may assume that targets are irreflexive when considering $(4)$. 
Then, a locally-injective homomorphism to an irreflexive target satisfying $(4)$ is equivalent to a locally-injective homomorphism to the same irreflexive target under either $(2)$ or $(3)$. 
As such, we need not consider $(4)$ and are left with three distinct cases.

Taking $(1)$ as our injectivity requirement defines  \emph{in-injective homomorphism};  taking $(2)$ defines  \emph{ios-injective homomorphism}; and taking $(3)$ defines \emph{iot-injective homomorphism}.
Here  ``ios" and ``iot" stand for ``in and out separately" and ``in and out together" respectively.

The problem of in-injective homomorphism is  examined by MacGillivray, Raspaud, and Swarts in \cite{MacRas, inj8}. 
They give a dichotomy theorem for the problem of in-injective homomorphism to reflexive oriented graphs, and one for the  problem of in-injective homomorphism to irreflexive tournaments.
The problem of in-injective homomorphism to irreflexive oriented graphs $H$ is shown to be NP-complete when the maximum in-degree of $H$, $\Delta^-(H)$, is at least $3$, and solvable in polynomial time when $\Delta^-(H)=1$. 
For the case $\Delta^-(H)=2$ they show that an instance of directed graph homomorphism polynomially transforms to an instance of in-injective homomorphism to a target with maximum in-degree $2$. 
As such the restriction of  in-injective homomorphism to targets $H$ so that $\Delta^-(H)=2$ constitutes a rich class of problems. 

The remaining problems, ios-injective homomorphism and iot-injective homomorphism, are considered by Campbell, Clarke and MacGillivray \cite{Russell, Rus1, Rus2}. In this paper we extend the results of Campbell, Clarke and MacGillivray to provide dichotomy theorems for the restriction of the problems of iot-injective homomorphism and ios-injective homomorphism to reflexive tournaments.

Preliminary results are surveyed in Section 2. In Section 3, we show that ios-injective homomorphism is NP-complete for reflexive tournaments on $4$ or more vertices. In Section 4, we show that iot-injective homomorphism is also NP-complete for reflexive tournaments on $4$ or more vertices. We close with a brief discussion of injective homomorphisms to irreflexive tournaments.

\section{Known Results}

For a fixed undirected graph $H$, the problem of determining whether an undirected graph $G$ admits a homomorphism to $H$ (i.e., \emph{the $H$-colouring problem}) admits a well-known dichotomy theorem.

\begin{theo}[\cite{HellNesetril}]
	Let $H$ be an undirected graph. If $H$ is irreflexive and non-bipartite, then $H$-colouring is NP-complete. If $H$ has a loop, or is bipartite, then $H$-colouring is solvable in polynomial time.
\end{theo}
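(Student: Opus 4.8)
The plan is to handle the two directions separately, disposing of the polynomial cases quickly and concentrating the effort on the $\mathrm{NP}$-completeness claim. Membership in $\mathrm{NP}$ is immediate in every case, since a proposed map $f\colon V_G\to V_H$ can be checked against every edge of $G$ in linear time. For the tractable cases: if $H$ has a loop at a vertex $v$, then the constant map sending all of $V_G$ to $v$ is a homomorphism (each edge of $G$ lands on the loop $vv$), so every instance is a \emph{yes}-instance. If $H$ is irreflexive and bipartite with at least one edge, then $H\to K_2$ and $K_2\to H$, so $G\to H$ holds if and only if $G\to K_2$, i.e.\ if and only if $G$ is bipartite; this is decidable in linear time by breadth-first search. (The edgeless case of $H$ reduces to testing whether $G$ is edgeless.)

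For the hard direction I would first reduce $H$ to a convenient normal form. Replacing $H$ by its \emph{core} leaves the class of graphs admitting a homomorphism to it unchanged, and the core of an irreflexive non-bipartite graph is again irreflexive, non-bipartite, and has at least three vertices. Moreover the components of a core are pairwise incomparable in the homomorphism order, which lets me restrict to a single connected non-bipartite component $H_0$: given a connected instance $G$ for $H_0$, attaching a disjoint copy of $H_0$ to $G$ by a path forces any homomorphism to send $G$ into the component of that copy, so $H_0$-colouring reduces to $H$-colouring. Thus it suffices to prove hardness when $H$ is a connected non-bipartite core on at least three vertices. The base case is that $K_n$-colouring is $\mathrm{NP}$-complete for every $n\ge 3$, which is the classical $\mathrm{NP}$-completeness of graph $3$-colouring.

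The engine of the reduction is the \emph{indicator construction}. Fix an auxiliary graph $I$ with two distinguished vertices $i,j$ admitting an automorphism that exchanges them, and define a new target $H^{*}$ on the vertex set $V_H$ by declaring $uv$ an edge of $H^{*}$ exactly when $I$ admits a homomorphism to $H$ with $i\mapsto u$ and $j\mapsto v$. Replacing each edge of an input by a copy of $I$ then shows that $H^{*}$-colouring reduces to $H$-colouring. A companion \emph{sub-indicator construction} pins chosen vertices and isolates induced subgraphs of the target. The strategy is to argue by induction on $|V_H|$: if $H$ is not already complete, then it has two non-adjacent vertices, and I would use appropriately chosen indicators and sub-indicators to produce, from $H$, a strictly smaller non-bipartite core $H'$ with $H'$-colouring reducing to $H$-colouring; the inductive hypothesis then finishes the argument, with complete graphs as the base case.

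The main obstacle is exactly this gadget engineering. One must prove that for every non-bipartite core that is not a complete graph there exist indicators realising a correct polynomial reduction to a \emph{strictly simpler} target, and that the target produced is still a non-bipartite core, so that the induction applies and terminates. This demands a delicate structural analysis of cores---exploiting odd girth and the existence of pairs of vertices with incomparable neighbourhoods---together with a careful verification that each indicator step preserves the homomorphism relation in both directions. By contrast, $\mathrm{NP}$ membership, the loop and bipartite cases, and the reductions to a connected core are all routine.
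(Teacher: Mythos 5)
The paper itself offers no proof of this statement: it is quoted as a known result with a citation to Hell and Ne\v{s}et\v{r}il, so the only benchmark is the original dichotomy proof. Your easy directions are correct and essentially complete: NP membership, the constant map onto a looped vertex, the reduction of the irreflexive bipartite case to $2$-colourability via homomorphic equivalence of $H$ with $K_2$, and the passage to a single connected non-bipartite core component $H_0$ (one caveat: your path-attachment trick needs the path chosen long enough that every pair of vertices of $H_0$ is joined by a walk of exactly that length, which holds because $H_0$ is connected and non-bipartite --- you should say this, since for a short path the backward direction of the reduction can fail).

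The hard direction, however, has a genuine gap, and you have in effect named it yourself. The claim that for every connected non-bipartite core $H$ that is not complete there exist an indicator and sub-indicators producing a strictly smaller non-bipartite core $H'$ with $H'$-colouring polynomially reducing to $H$-colouring is precisely the content of the Hell--Ne\v{s}et\v{r}il theorem; your proposal asserts it rather than proves it. Nothing a priori guarantees that suitable gadgets exist, that the constructed target $H^{*}$ is loop-free (an ill-chosen indicator readily creates loops, which make $H^{*}$-colouring trivially solvable and the reduction worthless) and still non-bipartite, or that the process terminates at a complete graph. Indeed, the published argument does not run as the clean induction you describe: it takes a minimal counterexample and carries out a long structural case analysis on shortest odd cycles and vertices of maximum degree, with bespoke indicators whose verification constitutes the bulk of the original paper. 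So your write-up is an accurate road map of the known proof, with the routine parts correctly executed, but as a proof it is incomplete at exactly the step that carries all the difficulty.
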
 

A dichotomy theorem for the complexity of $H$-colouring of directed graphs is given by Bulatov \cite{Bulat} and Zhuk \cite{Zhuks}.

For fixed small reflexive tournaments $T$,  Campbell, Clarke and MacGillivray give the following result for the complexity of  ios-injective $T$-colouring and iot-injective $T$-colouring, where ios-injective $T$-colouring and iot-injective $T$-colouring are defined analogously to $H$-colouring.

\begin{theo}[\cite{Russell, Rus1, Rus2}]
	\label{rus1}
	
	If $T$ is a reflexive tournament on $2$ or fewer vertices, then ios-injective $T$-colouring and iot-injective $T$-colouring are solvable in polynomial time. If $T$ is a reflexive tournament on $3$ vertices, then ios-injective $T$-colouring and iot-injective $T$-colouring are NP-complete.
	
\end{theo}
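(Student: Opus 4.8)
The statement splits into a tractability claim (targets on at most two vertices) and a hardness claim (targets on three vertices), and I would prove them separately.

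For tractability, the one-vertex reflexive tournament admits only the constant map, which is ios- or iot-injective precisely when every in- and out-neighbourhood (respectively every combined neighbourhood) of $G$ has size at most one; this is checkable in linear time. For the two-vertex reflexive tournament $T$ with $V(T)=\{0,1\}$ and its single arc oriented $0\to 1$, I would reduce to $2$-satisfiability. Introduce a Boolean variable $x_v$ for each $v\in V(G)$, reading $x_v$ as ``$f(v)=1$''. Each arc $uv$ of $G$ forbids the pattern $(f(u),f(v))=(1,0)$, which is exactly the implication clause $x_u\Rightarrow x_v$. Local injectivity into two colours means that any relevant neighbourhood of size at least three makes the instance infeasible (reject at once), while a relevant neighbourhood of size exactly two forces its two members to receive different colours, an inequality expressible by two $2$-clauses. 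Choosing the relevant neighbourhoods to be $N^+(v)$ and $N^-(v)$ separately handles ios, and choosing $N^+(v)\cup N^-(v)$ handles iot; in both cases the resulting instance is satisfiable if and only if the desired colouring exists, so the problem is solvable in polynomial time.

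For hardness, membership in NP is immediate, since a colouring is a succinct certificate verifiable in polynomial time. There are exactly two tournaments on three vertices, the transitive tournament $TT_3$ and the cyclic tournament $\vec{C_3}$; adding loops yields two targets, and combined with the two injectivity notions this gives four NP-hardness claims, each of which I would establish by a gadget reduction from a standard NP-complete problem such as $3$-satisfiability (or a bounded-occurrence restriction thereof). The first ingredient is a family of forcing gadgets. Over $TT_3$, with colours ordered $1<2<3$ and every arc required to be non-decreasing, a vertex with three out-neighbours is forced to colour $1$: its out-neighbours must be pairwise distinct yet all at least its own colour, which is possible with three colours only from the bottom. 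Dually, three distinct in-neighbours force colour $3$. Since $TT_3$ has no non-trivial automorphism, these gadgets pin down absolute colours, from which variable, wire, and clause gadgets can be assembled. Over $\vec{C_3}$ the situation is tighter: each in- and out-neighbourhood is confined to two consecutive colours modulo $3$, so in- and out-degrees are bounded by two, and — because the rotation is an automorphism — no absolute colour can ever be forced. Here I would build the reduction entirely from relative constraints, using $2$-fans (a vertex whose two out-neighbours must split across the two admissible consecutive colours) to encode binary choices, and directed paths to transport colour differences.

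The correctness argument then follows the usual two directions: a satisfying assignment is translated into a valid locally-injective colouring, and any valid colouring, read off the variable gadgets, yields a satisfying assignment. I expect the main obstacle to be the reflexivity of the target. Because every arc may legitimately be sent to a loop, colours are free to stagnate along arcs, so no forcing can exploit ``strict increase''; all pressure on the colouring must be generated through the injectivity constraints alone. This is especially acute for $\vec{C_3}$, where only relative (difference) information is ever available. A secondary, purely technical, difficulty is uniformity: the transitive and cyclic targets, and the ios versus iot notions, each change which neighbourhoods carry distinctness constraints, so every gadget must be re-verified — and the degrees of glued vertices carefully audited — to ensure that assembling gadgets introduces no unintended injectivity clash.
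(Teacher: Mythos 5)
First, a point of comparison: the paper does not prove this statement at all --- it is imported verbatim from Campbell, Clarke and MacGillivray \cite{Russell, Rus1, Rus2} and used as a black box, so there is no in-paper proof to match your attempt against; your proposal has to stand on its own. Its tractability half does: the one-vertex case reduces to a linear-time degree check on the constant map, and your 2-SAT encoding for the two-vertex target is sound and complete --- each arc $uv$ of $G$ yields the implication clause $\neg x_u \vee x_v$, injectivity on a relevant neighbourhood of size two yields the pair of clauses $(x_u \vee x_w) \wedge (\neg x_u \vee \neg x_w)$, any relevant neighbourhood of size three or more is an immediate rejection, and the choice of which neighbourhoods are ``relevant'' correctly separates the ios and iot variants. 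That half I would accept as a proof.

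The hardness half, however, has a genuine gap: it is a research plan, not a proof, and the missing part is precisely the content of the theorem. You establish NP membership and two correct forcing primitives for reflexive $T\!T_3$ (three out-neighbours, pairwise distinct by injectivity, must exhaust the colours, forcing the apex to colour $1$; dually for in-neighbours), but the variable, wire and clause gadgets --- for four separate hardness claims, since $T\!T_3$ and $C_3$ under ios and iot each impose different distinctness constraints --- are never constructed or verified, and gadget verification is exactly where such reductions live or die. Worse, for reflexive $C_3$ the stated primitives do not do what you claim: a directed path does not ``transport colour differences,'' because each arc may map to a loop or advance one step around the cycle, so a path of length $k$ realises \emph{any} total advance in $\{0,1,\dots,k\} \bmod 3$. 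Forcing an arc to be strict requires additional injectivity pressure (for instance, a second out-neighbour of the tail carrying the tail's own colour, or a loop on the tail if instances are allowed loops --- the paper's own constructions are loop-free), and the degree bounds you yourself derive (in- and out-degree at most two under ios, combined degree at most three under iot) leave almost no slack to attach that pressure while simultaneously building the fan-out a clause gadget needs. So for $C_3$ --- the case where, as you correctly observe, rotational symmetry prevents forcing any absolute colour --- the reduction strategy is not merely unfinished but rests on a primitive that fails as described. Until working gadgets are exhibited and checked in both injectivity regimes, the NP-completeness half remains an announcement of intent rather than an argument.
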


\begin{figure}[!ht]
\begin{center}
		\includegraphics{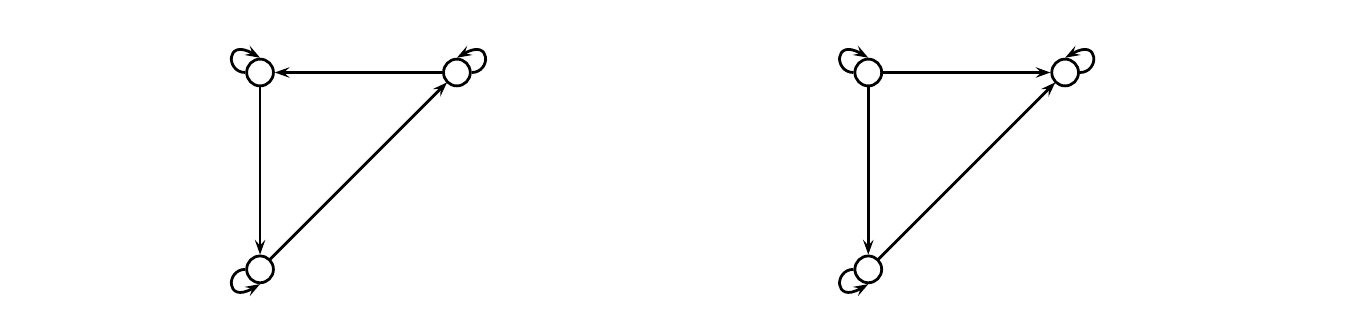}
	\caption{The two reflexive tournaments on three vertices: the reflexive three-cycle $C_3$ and the reflexive transitive tournament on three vertices $T\!T_3$.}
	\label{Sec2Graps}
\end{center}
\end{figure}

\section{Ios-injective homomorphisms}\label{sec:ios}

In this section we prove a dichotomy theorem for ios-injective $T$-colouring, where $T$ is a reflexive tournament. 
We show that both ios-injective $T_4$-colouring and ios-injective $T_5$-colouring are NP-complete (see Figures \ref{T4} and \ref{T5}).
We then show that any instance of ios-injective $T$-colouring, where $T$ is a reflexive tournament on at least $4$ vertices, polynomially transforms to an instance of
ios-injective $T^\prime$-colourings, where $T^\prime$ is $T_4$, $T_5$, $C_3$ or $T\!T_3$ (see Figures \ref{Sec2Graps}, \ref{T4} and \ref{T5}).
The dichotomy theorem follows from combining these results with the result in Theorem~\ref{rus1}.

We begin with a study of ios-injective $T_4$-colouring.
To show ios-injective $T_4$-colouring is NP-complete we provide a transformation from 3-edge-colouring subcubic graphs.
We  construct an oriented graph $H$ from a  graph $G$ so that $G$ has a 3-edge-colouring if and only if $H$ admits an ios-injective homomorphism to $T_4$.
The key ingredients in this construction are a pair of oriented graphs, $H_x$ and $H_{e}$, given in Figures \ref{VGT4IOS} and \ref{EGT4IOS}, respectively. Figures \ref{edgegadgios} and \ref{fig:vxcolour} give ios-injective $T_4$-colourings of $H_x$ and $H_e$, respectively.

%
%

\begin{figure}[!ht]
	\begin{center}
	\includegraphics{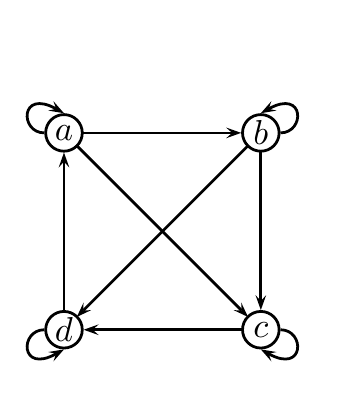}
	\caption{$T_4$ -- the only strongly connected reflexive tournament on four vertices. }
	\label{T4}
	\end{center}
\end{figure}

\begin{figure}[!ht]
	\begin{center}
	\includegraphics{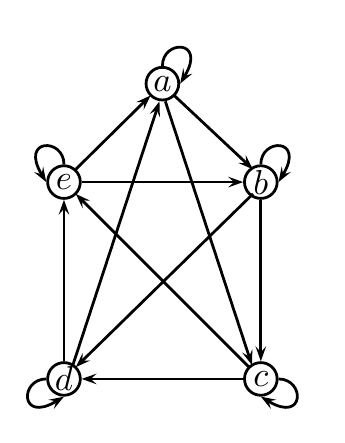}
	\caption{$T_5$ -- the only reflexive tournament on five vertices where all the vertices have in-degree and out-degree three.}
	\label{T5}
	\end{center}
\end{figure}

\begin{figure}
	\begin{center}
		\includegraphics[scale=1]{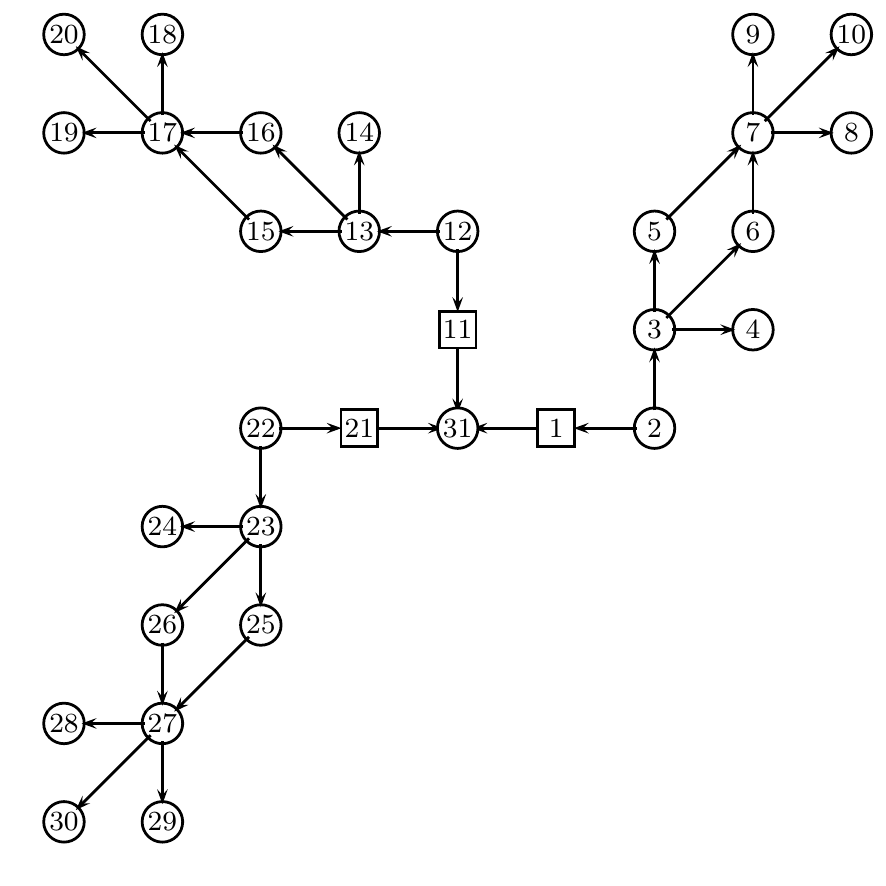}
	\caption{$H_x$.}
	\label{VGT4IOS}
	\end{center}
\end{figure}

\begin{figure}
	\begin{center}
		\includegraphics{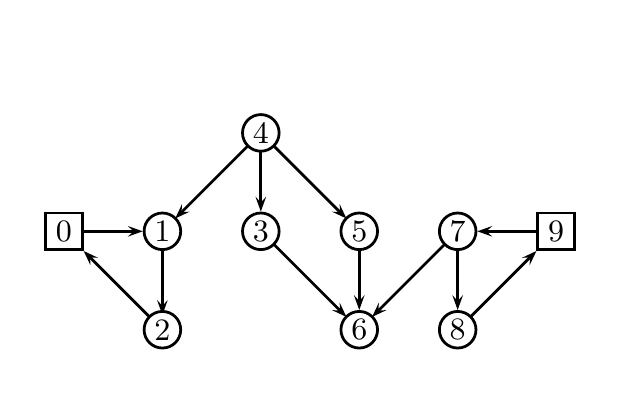}
	\caption{$H_{e}$.}
	\label{EGT4IOS}
	\end{center}
\end{figure}

\begin{lemma} \label{lem:vxgadget1}
	In any ios-injective $T_4$-colouring of $H_x$
	\begin{enumerate}
		\item the vertices $3,13$ and $23$ are coloured $a$; and
		\item vertex $31$ is coloured $d$.
	\end{enumerate} 
\end{lemma}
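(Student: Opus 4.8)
The plan is to combine the homomorphism condition with the injectivity condition through a degree-counting argument, and then to propagate the forced colours using the particular asymmetry of $T_4$. The single inequality driving everything is this: since $H_x$ is an oriented graph (hence irreflexive) while $T_4$ is reflexive, for any ios-injective homomorphism $f$ and any vertex $u$ of $H_x$, injectivity of $f$ on $N^-(u)$ together with $f(N^-(u)) \subseteq N^-(f(u))$ forces $|N^-(u)| \le |N^-(f(u))|$, and symmetrically $|N^+(u)| \le |N^+(f(u))|$, where the neighbourhoods in $T_4$ are counted with the loop. Reading $T_4$ off Figure~\ref{T4}, two of its vertices have $|N^+| = 3$ and $|N^-| = 2$ (namely $a$ and $b$) while the other two have $|N^+| = 2$ and $|N^-| = 3$ (namely $c$ and $d$). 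Hence every vertex of $H_x$ of out-degree $3$ must receive colour $a$ or $b$, and every vertex of in-degree $3$ must receive colour $c$ or $d$.

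Next I would exploit two further features of $T_4$. First, its only directed triangles are $a\to b\to d\to a$ and $a\to c\to d\to a$; both contain $a$ and $d$, and in both the arc $d\to a$ appears. Thus any directed triangle of $H_x$ whose vertices are kept pairwise distinct by the degree restriction above must map onto $\{a,x,d\}$ with $x\in\{b,c\}$, the vertex coloured $d$ immediately preceding the vertex coloured $a$ around the triangle. Second, the colours $a,b$ are separated by $N^-(a)=\{a,d\}$ versus $N^-(b)=\{a,b\}$, and $c,d$ are separated by $N^+(d)=\{a,d\}$ versus $N^+(c)=\{c,d\}$; consequently a vertex already confined to $\{a,b\}$ is forced to $a$ as soon as it has an in-neighbour coloured $d$, and a vertex confined to $\{c,d\}$ is forced to $d$ as soon as it has an out-neighbour coloured $a$.

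With these tools the argument should run as follows. Reading $H_x$ from Figure~\ref{VGT4IOS}, I would locate a directed triangle carrying $31$ (together with its incident high-degree vertices) and use the triangle structure plus the degree restrictions to pin $31$ to $d$ and its successor to $a$; from there the implications ``in-neighbour $d$ forces $a$'' and ``out-neighbour $a$ forces $d$'' should cascade through the gadget, successively fixing $3$, $13$ and $23$ to $a$. The main obstacle is the seeding step: I must rule out the degenerate maps in which a triangle or a high-degree neighbourhood collapses onto a single colour (the reflexive loops make this a genuine danger), and I must verify that at each vertex the simultaneous injectivity at $u$ and at all of its neighbours really does leave only the claimed colour. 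This is where the bulk of the case analysis lives, and it is entirely driven by the specific arcs of $H_x$. As a consistency check I would confirm that the explicit colouring of $H_x$ indicated in Figure~\ref{edgegadgios} assigns $a$ to $3,13,23$ and $d$ to $31$, so that the forced values are in fact realisable.
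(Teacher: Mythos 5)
Your structural analysis of $T_4$ is accurate and coincides with the tools the paper actually uses: counting loops, $a$ and $b$ are the only vertices of out-degree $3$, so injectivity on out-neighbourhoods forces every out-degree-$3$ vertex of $H_x$ into $\{a,b\}$, and your separation facts ($N^-(a)=\{a,d\}$ versus $N^-(b)=\{a,b\}$, $N^+(d)=\{a,d\}$ versus $N^+(c)=\{c,d\}$) are correct and do real work in the surrounding lemmas. But what you have submitted is a plan, not a proof: you explicitly defer the ``seeding step,'' conceding that ``this is where the bulk of the case analysis lives,'' and that case analysis \emph{is} the content of the lemma. Nothing in your write-up actually pins a single vertex of $H_x$ to a single colour, nor rules out the degenerate collapses you yourself flag as the main danger.

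Beyond incompleteness, your proposed mechanism does not match the gadget, and your dependency order is inverted. The paper proves part (1) first and independently of vertex $31$: vertices $3$ and $7$ both have out-degree $3$ (hence colours in $\{a,b\}$), with $5$ and $6$ out-neighbours of $3$ and in-neighbours of $7$. If $7$ were coloured $a$, injectivity forces $5,6$ to be coloured $d$ and $a$, impossible since neither $N^+(a)=\{a,b,c\}$ nor $N^+(b)=\{b,c,d\}$ contains both $a$ and $d$; so $7$ is coloured $b$, and if $3$ were also coloured $b$, then $5$ and $6$ would each have $b$ as both an in- and an out-neighbour, forcing both to be coloured $b$ and violating injectivity; hence $3$ (and by symmetry $13$, $23$) is coloured $a$. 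Part (2) then \emph{uses} part (1): each square vertex $1$, $11$, $21$ has an in-neighbour with an out-neighbour already coloured $a$, so none can be coloured $a$; since the three square vertices share the common out-neighbour $31$, injectivity gives them $b$, $c$, $d$ in some order, and $d$ is the unique common out-neighbour of $b$, $c$, $d$ in $T_4$. There is no directed triangle through $31$ doing the seeding, so your strategy of pinning $31$ to $d$ first via the triangle structure of $T_4$ and then cascading outward to $3$, $13$, $23$ would not get off the ground on this gadget --- the colour of $31$ is a consequence of part (1), not a source for it. Your consistency check against Figure~\ref{fig:vxcolour} only confirms realisability, which is needed for Theorem~\ref{thm:main4} but proves nothing about forcing.
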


\begin{proof}
	\noindent\emph{(1)} By symmetry, it suffices to show the claim for vertex 3. Let us first note that the vertices 3 and 7 have out-degree 3 and can therefore only be coloured $a$ or $b$, as these are the only vertices of out-degree  $3$ in $T_4$. 
	If vertex 7 is coloured $a$, then its two in-neighbours, vertices 5 and 6, are coloured $d$ and $a$. 
	However, this  is impossible as no vertex of out-degree three in $T_4$ has both $d$ and $a$ as out-neighbours. 
	Hence, vertex 7 is coloured $b$. 
	If vertex 3 is coloured $b$, then vertices 5 and 6 would be both in- and out-neighbours of vertices coloured $b$.
	Thus, each of vertices $5$ and $6$ are coloured $b$.
	This is a violation of the injectivity requirement.
	Therefore  vertex 3 (and by symmetry, the vertices 13 and 23) must be coloured $a$. \\
	
	\noindent\emph{(2)} Notice that the square vertices in the $H_x$ (vertices 1, 11 and 21) cannot be coloured $a$;  they each have an in-neighbour that already has an out-neighbour coloured $a$. 
	These square vertices have a common out-neighbour and so must receive distinct colours by the injectivity requirement.
	As none is coloured $a$, these three vertices are coloured $b$, $c$ and $d$, in some order.
	The only vertex that is an out-neighbour of $b$, $c$ and $d$ in $T_4$ is $d$. 
	And so, the common out-neighbour of vertices 1, 11 and 21 (i.e., vertex 31) has colour $d$.
\end{proof}

\begin{lemma} \label{lem:egadget1}
	Let $H_e^\prime$ be an oriented graph formed from a copy of $H_{e}$ and two copies of $H_x$ by identifying vertex $0$ in $H_e$ with any square vertex in one copy of $H_x$ and identifying vertex $9$ in $H_e$ with any square vertex in the second copy of $H_x$. In any ios-injective $T_4$-colouring of $H_e^\prime$, the vertices $0$ and $9$ in the subgraph induced by $H_e$ have the same colour.
\end{lemma}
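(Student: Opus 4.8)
The plan is to use Lemma~\ref{lem:vxgadget1} to pin down the colours of vertices $0$ and $9$ in terms of the gadget structure, then trace through the forced colourings in $H_e$ to show they must coincide. Since vertex $0$ is identified with a square vertex of one copy of $H_x$ and vertex $9$ with a square vertex of the other, Lemma~\ref{lem:vxgadget1} tells us what constraints the $H_x$-gadgets impose: the square vertices are coloured with three distinct colours from $\{b,c,d\}$, and each such square vertex has a common out-neighbour (vertex $31$) coloured $d$. So the first step is to determine precisely which colours vertices $0$ and $9$ can take once they are forced to play the role of a square vertex in their respective $H_x$-copies.

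First I would read off from the figure (Figure~\ref{EGT4IOS}) the internal arc structure of $H_e$ connecting vertices $0$ and $9$, and identify the in- and out-neighbourhoods of each vertex along the way. The strategy is then a forcing argument: starting from the colour constraints on $0$ and $9$ inherited from their $H_x$-gadgets, I would propagate colour restrictions through $H_e$ using two tools repeatedly --- the homomorphism condition (arcs map to arcs in $T_4$) and the ios-injectivity condition (the restriction of the colouring to $N^-(v)$ is injective, and likewise to $N^+(v)$). The aim is to show that the only colour assignments consistent with both conditions force $f(0) = f(9)$; equivalently, I would try to derive a contradiction from the assumption $f(0) \neq f(9)$ by showing that some internal vertex of $H_e$ would then have to violate injectivity or be assigned a colour with no valid out- or in-neighbour in $T_4$.

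The key structural facts I expect to lean on are the out-degree/in-degree profiles of the colours in $T_4$: only two colours have out-degree three and only two have in-degree three, and these strong constraints were exactly what drove the proof of Lemma~\ref{lem:vxgadget1}. I would look for analogous high-degree vertices inside $H_e$ whose colours are thereby restricted to a two-element set, and use the connecting paths between such vertices to couple the colour of $0$ to that of $9$. The gadget is presumably designed so that $H_e$ behaves like an ``equality constraint'' between its two terminals, so I anticipate the forcing chain to be symmetric about the $0$--$9$ axis.

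The main obstacle will be carrying out the case analysis cleanly: without collapsing the argument, I must handle each admissible colour for vertex $0$ (among those permitted by its $H_x$-gadget) and show that the induced forcing leaves only the colouring with $f(9) = f(0)$ feasible, ruling out every alternative by an injectivity or adjacency violation. The delicate part is ensuring the injectivity conditions are applied to the correct neighbourhoods --- in- and out-separately, as the ios definition demands --- and verifying that the forcing genuinely propagates all the way across $H_e$ rather than stalling at some vertex with too much colour freedom. Keeping track of which vertices are square (and hence inherit the $\{b,c,d\}$ restriction from Lemma~\ref{lem:vxgadget1}) versus round will be essential to closing the argument.
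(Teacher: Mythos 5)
Your plan coincides with the paper's own proof: the paper likewise uses Lemma~\ref{lem:vxgadget1} to rule out colour $a$ for vertices $0$ and $9$ (each has an in-neighbour with an out-neighbour already coloured $a$) and to equip each with an out-neighbour coloured $d$ (vertex $31$ of its $H_x$-copy), then carries out exactly the three-case forcing analysis over $f(0)\in\{b,c,d\}$ that you describe, driven by the out-degree/in-degree profiles of $T_4$ (only $a,b$ have out-degree three, only $c,d$ have in-degree three). The one caveat is that your proposal defers the case analysis itself, which is the entire substance of the paper's proof; but the ingredients you identify are precisely the ones that make it close, and the forcing does propagate across $H_e$ in every case (with a subcase contradiction when $f(0)=c$ and vertex $4$ is tried with colour $a$), yielding $f(9)=f(0)$.
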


\begin{proof}
	Let $H_e^\prime$ be constructed as described.
	Consider an ios-injective $T_4$-colouring of $H_e^\prime$.
	We examine the colours of the vertices in the subgraph induced by the copy of $H_e$.
	By Lemma \ref{lem:vxgadget1} and the construction of  $H_e^\prime$, vertices $0$ and $9$ each have an in-neighbour that has an out-neightbour coloured $a$.
	By the injectivity requirement neither vertex $0$ or $9$ is coloured $a$. 
	We proceed in cases to show that vertices $0$ and $9$ receive the same colour.
	
	\emph{Case I: Vertex 0 is coloured $b$.}
	It cannot be that vertex 1 is coloured $d$, as vertex 0  has an out-neighbour coloured $d$ -- vertex 31 in a copy of $H_x$.
	It cannot be that vertex 1 is coloured $c$ as no 3-cycle of $T_4$ contains both a vertex coloured $b$ and a vertex coloured $c$. 
	Thus,  vertex 1 must be coloured $b$. 
	The vertex 2 is both an in-neighbour and an out-neighbour of vertices coloured $b$ and is therefore coloured $b$. 
	The vertex 4 is an in-neighbour of vertex 1, and so cannot be coloured $b$ as vertex 1 already has an in-neighbour coloured $b$. 
	The vertex 4 must thus be coloured $a$. 
	By injectivity, the out-neighbours of vertex 4 must receive distinct colours that are out-neighbours of $a$ in $T_4$.
	Therefore vertices 3 and 5 are coloured $a$ and $c$ in some order, as vertex $1$ is coloured $b$.
	The only common out-neighbour of $a$ and $c$ in $T_4$ is $c$. 
	As such, vertex 6 must be coloured $c$. 
	By injectivity, each of the in-neighbours of vertex $6$ must receive distinct colours that are in-neighbours of $c$ in $T_4$.
	And so vertex 7 must be coloured $b$. 
	As vertex 9 cannot be coloured $a$ and it has an out-neighbour coloured $b$, namely vertex 7, we have that vertex 9 must be coloured $b$.
	Thus vertices 0 and 9 have the same colour.
	
	\emph{Case II: Vertex 0 is coloured $c$.}
	The out-neighbours of $c$ in $T_4$ are $c$ and $d$.
	It cannot be that vertex 1 is coloured $d$, as vertex 0  has an out-neighbour coloured $d$ -- vertex 31 in a copy of $H_x$.
	And so vertex $1$ is coloured $c$.
	
	The vertex $4$ has an out-neighbour coloured $c$, and so must be coloured  $a$ or $b$ or $c$.
	Since vertex 0 is coloured $c$, vertex 4 cannot be coloured $c$ without violating injectivity. We claim vertex 4 is coloured $b$.

	If vertex 4 is coloured $a$, then by injectivity, vertices 3 and 5 are coloured $a$ and $b$, in some order.
	The only out-neighbour of $a$ and $b$ in $T_4$ that has in-degree 3 is $c$. 
	As such vertex 6 is coloured $c$. 
	The vertex $c$ in $T_4$ has three in-neighbours -- $a$, $b$, and $c$.
	As vertex 6 has in-neighbours coloured $a$ and $b$ (namely, vertices 3 and 5), then by injectivity the third in-neighbour of vertex 6 (namely, vertex 7) is  coloured $c$.
	In $T_4$, $c$ has two out-neighbours: $c$ and $d$.
	Since vertex 7 is coloured $c$ and already has an out-neighbour coloured $c$,  vertex 8 must be coloured $d$. 
	The vertex $9$ has an in-neighbour coloured $d$.
	Only vertices $a$ and $d$ in $T_4$ have $d$ as an in-neighbour.
	Therefore vertex $9$ is coloured with $a$ or $d$.
	However, we have shown previously that vertex $9$ cannot be coloured $a$.
	This implies, that vertex $9$ is coloured $d$.
	However, vertex $9$ has an out-neighbour coloured $c$.
	Since $c$ is not an out-neighbour of $d$ in $T_4$, we arrive at a contradiction.
	Thus vertex 4 is not coloured $a$.
	Therefore vertex $4$ is coloured $b$.
	
	Since vertex $4$ is coloured $b$,  vertices 3 and 5 are coloured $b$ and $d$, in some order. 
	The only common out-neighbour of $b$ and $d$ in $T_4$ is $d$.
	Therefore  vertex 6 is  coloured $d$. 
	Hence, by injectivity, vertex 7  is coloured $c$.
	Since vertex $7$ has an out-neighbour coloured $d$,  vertex 8, another out-neighbour of vertex $7$ must be coloured $c$. 
	Since 9 has both an in-neighbour and an out-neighbour coloured $c$, vertex 9 must be coloured $c$.
	Thus vertices 0 and 9 have the same colour.
	
	\emph{Case III: Vertex 0 is coloured $d$.}
	It cannot be that vertex 1 is coloured $d$, as vertex 0  has an out-neighbour coloured $d$ -- vertex 31 in a copy of $H_x$.
	Vertex $d$ has two out-neighbours in $T_4$: $a$ and $d$.
	Therefore vertex $1$ is coloured $a$.
	
	The vertex 4 has out-degree 3  and an out-neighbour coloured $a$.
	Vertex $a$ is the only vertex in $T_4$ to have out-degree 3 and have $a$ as an out-neighbour.
	Therefore vertex $4$ is coloured $a$.
	By injectivity the vertices 3 and 5, the remaining out-neighbours of vertex $4$,  are coloured $b$ and $c$. 
	The vertex 7 cannot be coloured $d$ since 9 already has an out-neighbour coloured $d$ -- vertex 37 in a copy of $H_x$.
	Moreover, vertex 7 is an in-neighbour of 6, which already has in-neighbours coloured $c$ and $b$. 
	Hence, vertex 7 must be coloured $a$. 
	In $T_4$ the only in-neighbours of $a$ are $a$ and $d$.
	Thus vertex $9$ is coloured $a$ or $d$.
	Since vertex 9 has an out-neighbour coloured $d$, it cannot be coloured $a$.
	Therefore vertex $9$ is coloured $d$, as required.
\end{proof}


\begin{theo}\label{thm:main4}
	The problem of ios-injective $T_4$-colouring is NP-complete.
\end{theo}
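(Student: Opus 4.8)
The plan is to establish membership in NP and then reduce from $3$-edge-colouring of subcubic graphs, which is NP-complete. Membership in NP is immediate: an ios-injective $T_4$-colouring is a vertex labelling whose validity -- that it is a homomorphism, and that it is injective on each in-neighbourhood and on each out-neighbourhood -- can be checked in polynomial time. For the hardness reduction, given a subcubic graph $G$, I would build an oriented graph $H$ as follows. For each vertex $v$ of $G$, take a copy $H_v$ of the gadget $H_x$ (Figure~\ref{VGT4IOS}); its three square vertices will play the role of the (at most three) edge-slots at $v$. For each edge $e = uv$ of $G$, take a copy $H_e$ of the edge gadget (Figure~\ref{EGT4IOS}) and identify its vertex $0$ with an unused square vertex of $H_u$ and its vertex $9$ with an unused square vertex of $H_v$; since $G$ is subcubic and $H_x$ has exactly three square vertices, each square vertex is used by at most one edge gadget. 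The resulting $H$ has size linear in $|V(G)| + |E(G)|$, so the construction runs in polynomial time. The intended correspondence is that the three admissible square-vertex colours $b$, $c$, $d$ of $T_4$ encode the three edge-colours.

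The reverse implication (a colouring of $H$ yields a $3$-edge-colouring of $G$) is the clean direction and follows directly from the two lemmas already proved. Given an ios-injective $T_4$-colouring of $H$, Lemma~\ref{lem:vxgadget1} forces the three square vertices of every $H_v$ to receive the three distinct colours $b$, $c$, $d$ in some order. For each edge $e = uv$, the colour of the square vertex of $H_u$ identified with vertex $0$ of $H_e$ equals, by Lemma~\ref{lem:egadget1}, the colour of the square vertex of $H_v$ identified with vertex $9$ of $H_e$; hence setting $\phi(e)$ to this common colour is well defined. Because the square vertices at any fixed $v$ carry distinct colours, edges sharing the endpoint $v$ receive distinct colours, so $\phi$ is a proper $3$-edge-colouring of $G$.

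For the forward implication I would turn a proper $3$-edge-colouring $\phi$ of $G$ into an ios-injective $T_4$-colouring of $H$ by pasting together the explicit gadget colourings of Figures~\ref{fig:vxcolour} and~\ref{edgegadgios}. At each vertex $v$ the square vertices are coloured by the $\phi$-colours of the incident edges (filling any unused slots with the remaining colours from $\{b,c,d\}$), and each edge gadget $H_e$ is coloured so that its vertices $0$ and $9$ both take the colour $\phi(e)$. The main obstacle, and essentially the only real content of this direction, is consistency: one must verify that $H_x$ admits a valid colouring realising \emph{every} assignment of $b,c,d$ to its square vertices, that $H_e$ admits a valid colouring for \emph{each} of the three possible endpoint colours, and that the chosen colourings agree at the identified vertices. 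I would handle this by exhibiting the base colourings in the figures and using the symmetry of the gadgets (the same symmetry invoked in Lemma~\ref{lem:vxgadget1}) to permute $b$, $c$, $d$ and thereby cover all required cases; the homomorphism and injectivity conditions then reduce to a finite check on the constant-size gadgets. Combining both implications shows that $G$ is $3$-edge-colourable if and only if $H$ admits an ios-injective $T_4$-colouring, which together with membership in NP completes the proof.
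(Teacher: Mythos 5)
Your proposal is correct and follows essentially the same route as the paper's proof: the identical reduction from $3$-edge-colouring of subcubic graphs, the same identification of edge-gadget endpoints with square vertices (at most one per slot), well-definedness of the induced edge colour via Lemma~\ref{lem:egadget1}, distinctness of colours at each vertex via the square-vertex analysis of Lemma~\ref{lem:vxgadget1}, and the converse direction assembled from the explicit gadget colourings of Figures~\ref{edgegadgios} and~\ref{fig:vxcolour} together with the symmetry of $H_x$ to normalise the colours on vertices $1$, $11$, $21$. Your explicit verification of membership in NP is a small correct addition that the paper leaves implicit.
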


\begin{proof}
	The transformation is from  3-edge-colouring subcubic graphs \cite{Holyer}.

	Let $G$ be a graph with maximum degree at most $3$ and let $\tilde{G}$ be an arbitrary orientation of $G$.
	We create an oriented graph $H$ from $\tilde{G}$ as follows.
	For every $x \in V(G)$ we add $H_x$, a copy of the oriented graph given in Figure  \ref{VGT4IOS}, to $H$.
	For every arc $e \in E(\tilde{G})$ we add $H_{e}$, a copy of the oriented graph given in  Figure \ref{EGT4IOS}, to $H$.
	To complete the construction of $H$, for each arc $e = uv \in E(\tilde{G})$ we identify vertex $0$ in $H_{e}$ with one of the three square vertices (i.e., vertices $1$, $11$, or $21$) in $H_u$ and identify vertex $9$ in $H_{e}$ with one of the three square vertices in $H_v$.
	We identify these vertices in such a way that each square vertex in a copy of $H_x$ is identified with at most one square vertex from a copy of $H_e$.
	We note that this is always possible as vertices in $G$ have degree at most three.

	
	We claim  $G$ has a  3-edge-colouring if and only if $H$ has an ios-injective $T_4$-colouring.
	Suppose an ios-injective $T_4$-colouring of $H$ is given. 
	This ios-injective $T_4$-colouring induces a 3-edge-colouring of $G$: the colour of an edge in $e \in E(G)$ is given by colour of vertices $0$ and $9$ in corresponding copy of $H_{e}$ contained in $H$. 
	By Lemma \ref{lem:egadget1}, this colour is well-defined.
	By Lemma \ref{lem:vxgadget1}, vertex 31 in each copy of $H_e$ is coloured $d$. Therefore, each of the edges incident with any vertex of $G$ receive different colours and no more than $3$ colours, namely $b$, $c$, and $d$, are used on the edges of $G$.
	
	\begin{figure}
		\begin{center}
			\includegraphics[width = .3\linewidth]{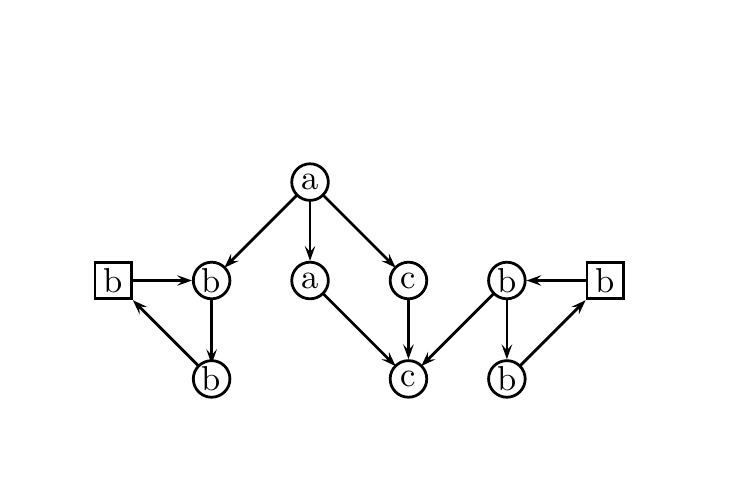}
		\includegraphics[width = .3\linewidth]{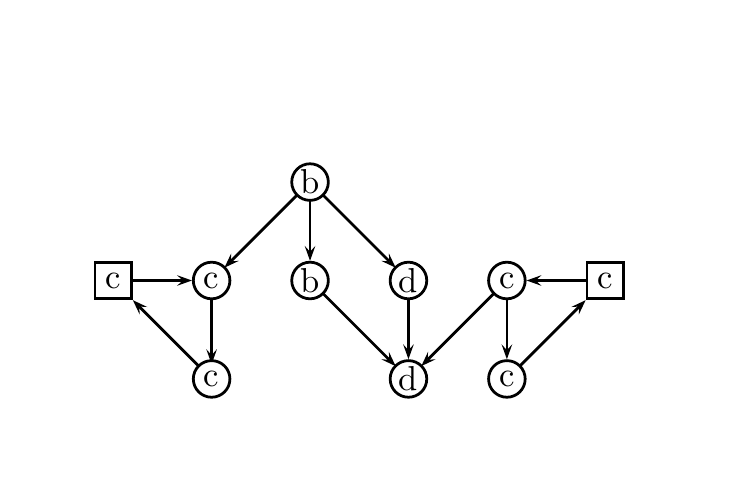}
		\includegraphics[width = .3\linewidth]{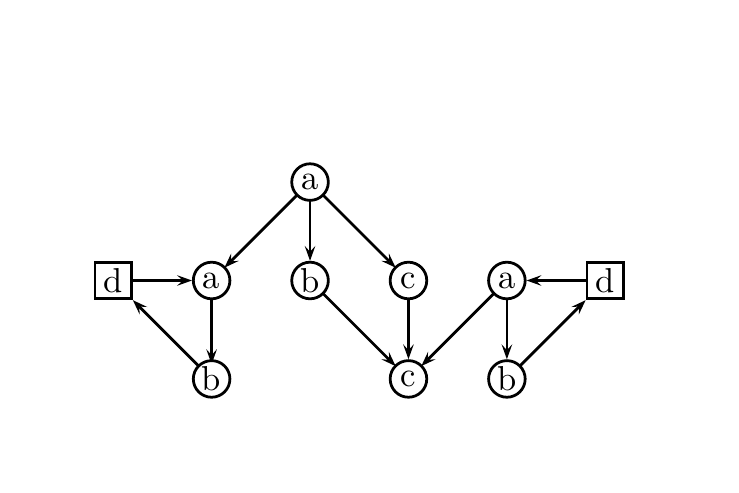}
		\caption{Colouring the copies of $H_{e}$ in  Theorem \ref{thm:main4}.}
		\label{edgegadgios}
		\end{center}
	\end{figure}
	
	
	
	
	Suppose a 3-edge-colouring of $G$, $f: E(G) \to \{b,c,d\}$ is given.
	For each $e \in E(G)$ we colour $H_{e}$ using one of the ios-injective $T_4$-colourings given in Figure~\ref{edgegadgios}.
	We choose the colouring of each copy of $H_{e}$ so that vertices $0$ and $9$ in that copy are assigned the colour $f(e)$.
	To complete the proof, we show that such a colouring can be extended to all copies of $H_x$ contained in $H$.
	
	Recall for each copy of $H_x$, the vertices 1, 11 and 21 can be respectively identified with either vertex 0 or vertex 9 in some copy of $H_{e}$.
	Since $f$ is a 3-edge-colouring of $G$, for each $x \in V(G)$, each of the vertices 1, 11 and 21 in $H_x$ are coloured with distinct colours from the set $\{b,c,d\}$ when we colour each copy of $H_{e}$ using Figure~\ref{edgegadgios}.
	
	By symmetry of $H_x$, we may assume without loss of generality that vertices $1,11$ and $21$ are respectively coloured $b$, $c$ and $d$ in each copy of $H_x$.
	The ios-injective $T_4$-colouring given in Figure \ref{fig:vxcolour} extends a pre-colouring of the vertices 1, 11 and 21 with colours $b$, $c$, and $d$, respectively, to an ios-injective $T_4$-colouring of $H_x$. 
	Therefore $G$ has a 3-edge-colouring if and only if $H$ admits an ios-injective $T_4$-colouring
	
	Since the construction of $H$ can be carried out in polynomial time,  ios-injective $T_4$-colouring is NP-complete
\end{proof}

\begin{figure}[!ht]
	\begin{center}
		\includegraphics[scale = 1]{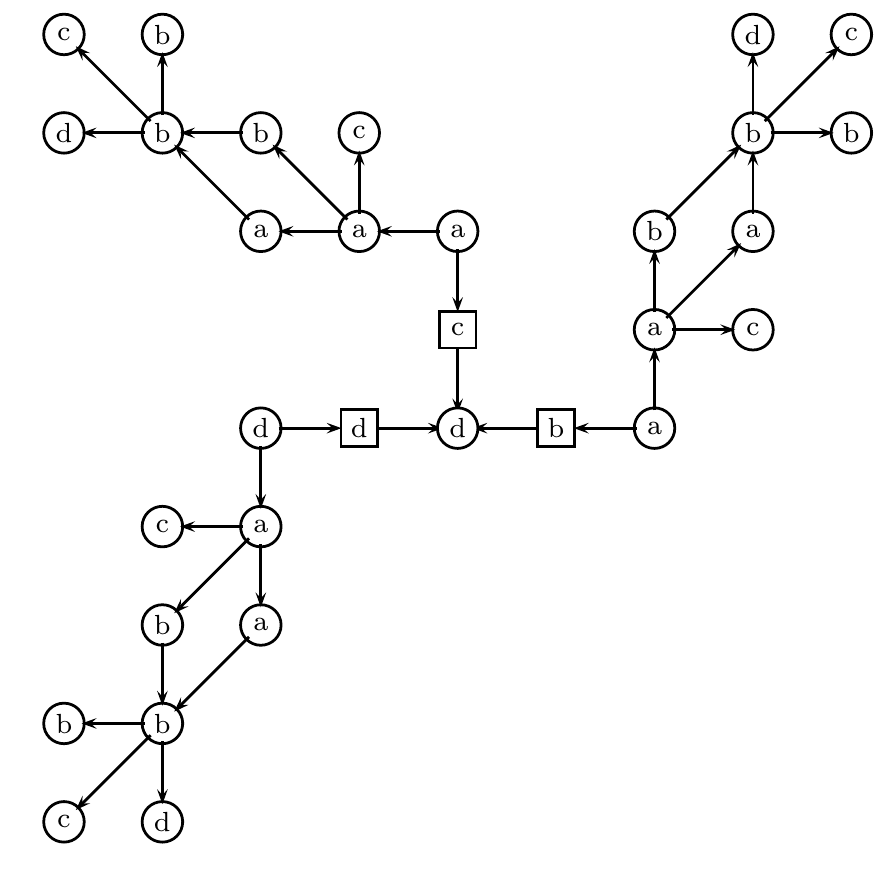}
	\caption{Colouring the copies of $H_x$ in  Theorem \ref{thm:main4}.}
	\label{fig:vxcolour}
	\end{center}
\end{figure}

We give a similar argument for $T_5$ (see Figure~\ref{T5}). 
The transformation is from ios-injective $C_3$-colouring, which is NP-complete by Theorem~\ref{rus1}.  
We construct an oriented graph $J$ from a  graph $G$ so that $G$ admits an ios-injective homomorphism to $C_3$ if and only if 
$J$ admits an ios-injective homomorphism to $T_5$.
The key ingredient in this construction is the oriented graph $J_v$, given in Figure \ref{VGT5IOS}. 

For each $n>0$ we construct an oriented graph $J_n$ from $n$ copies of $J_v$, say $J_{v_0}, J_{v_1}, \dots, \\ J_{v_{n-1}}$, by letting vertices 17, 18 and 19 of $J_{v_i}$ be in-neighbours of vertex $0$ in $J_{v_{i+1}\pmod n}$ for all $0 \leq i \leq n-1$.

\begin{figure}
\begin{center}
		\includegraphics{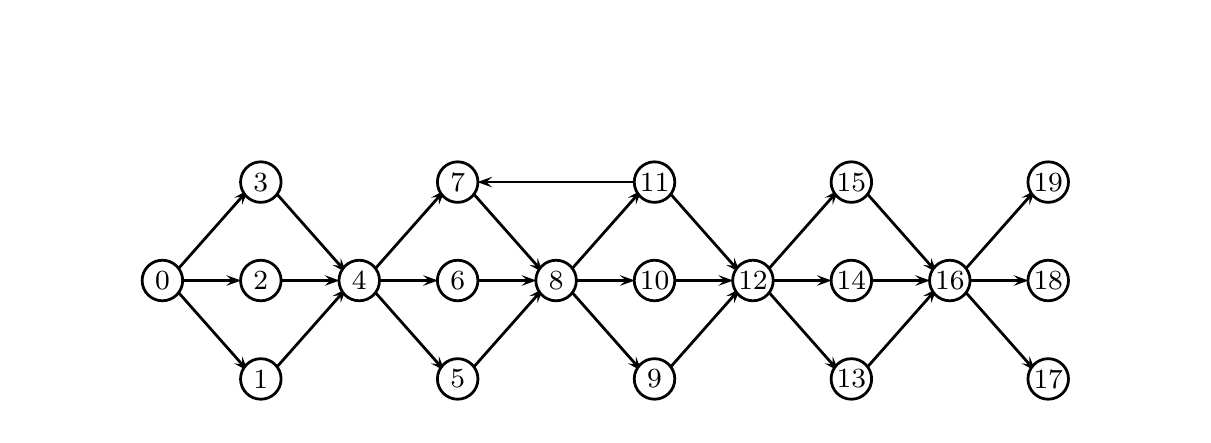}
	\caption{$J_v$.}
	\label{VGT5IOS}
\end{center}
\end{figure}

\begin{lemma} \label{lem:loopLemma}
	For any positive integer $n$, in an oriented ios-injective $T_5$-colouring of $J_n$ each of the vertices labelled $0$ (respectively, $4$, $8$, $12$ and $16$) receives the same colour.
\end{lemma}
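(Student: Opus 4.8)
The plan is to prove a single-copy propagation statement and then close it up around the cycle. Two structural facts about $T_5$ drive every step and I would record them first. Since $T_5$ is the reflexive rotational tournament on five vertices, every vertex has exactly two proper out-neighbours, two proper in-neighbours, and a loop, so each vertex has exactly three in-neighbours (itself and its two proper in-neighbours) and three out-neighbours. Consequently a vertex of $T_5$ is uniquely determined by its in-neighbourhood, and likewise by its out-neighbourhood, because these are the ``consecutive'' colour-triples of the rotational labelling and distinct vertices yield distinct triples. The ios-injectivity requirement then forces the in-neighbours of any vertex of $J_n$ to receive distinct colours, and similarly its out-neighbours; because only three in-neighbour colours are available per target vertex, a vertex of in-degree three in $J_n$ must have its in-neighbours coloured with the \emph{entire} in-neighbourhood of its own colour.

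The core of the argument is the single-copy analysis, and here I would proceed exactly in the style of the proofs of Lemmas \ref{lem:vxgadget1} and \ref{lem:egadget1}. Fixing the colour of vertex $0$ of $J_{v_i}$ to be some colour $c$ (legitimate by the vertex-transitivity of $T_5$), I would trace the forced colours through $J_v$, repeatedly using that a vertex which is a common out-neighbour of a prescribed pair of colours, or an in-neighbour that injectivity forbids from reusing an already-present colour, is pinned down by the structure of $T_5$. The goal of this bookkeeping is twofold: to show that the colours of vertices $4,8,12,16$ are determined by $c$ alone (so the map from the colour of $0$ to the colours of these four vertices is the same in every copy), and to show that vertices $17,18,19$ are forced to receive the three distinct colours comprising the in-neighbourhood of $c$ in $T_5$.

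Granting the single-copy result, the between-copy and cyclic steps are short. By construction the vertices $17,18,19$ of $J_{v_i}$ are precisely the in-neighbours of vertex $0$ of $J_{v_{i+1}}$; since vertex $0$ of $J_{v_{i+1}}$ is a source within its own copy, these are its only in-neighbours in $J_n$, so its in-degree is exactly three. As those three colours form the in-neighbourhood of $c$, and a vertex of $T_5$ is determined by its in-neighbourhood, vertex $0$ of $J_{v_{i+1}}$ is again forced to colour $c$. Thus the colour of vertex $0$ is preserved from each copy to the next; following this around the cycle makes every vertex labelled $0$ receive the common colour $c$, and the single-copy determination of vertices $4,8,12,16$ then makes each of those four classes monochromatic as well.

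The main obstacle is the single-copy forcing. Because $T_5$ has five colours rather than four, each forcing step admits more a priori possibilities than in the $T_4$ arguments, so the case analysis threading colours through $J_v$ is the longest and most delicate part. The decisive sub-goal is to rule out any \emph{rotated} outcome and to show that the in-neighbourhood triple reappearing at vertices $17,18,19$ is exactly that of $c$, and not of some shifted colour; this is what makes the propagation colour-preserving and hence consistent around the cycle for every positive integer $n$.
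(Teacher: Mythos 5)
Your plan is the paper's proof in outline: use vertex-transitivity of $T_5$ to fix the colour of vertex $0$, force colours block by block through $J_v$ via injectivity on out-neighbourhoods and unique common out-neighbours, then propagate through vertices $17,18,19$ to vertex $0$ of the next copy and close the cycle. The structural facts you record are exactly the right ones. The one substantive shortfall is that you defer the single computation on which the lemma rests --- ruling out a ``rotated outcome'' --- and you misjudge its character: there is no delicate case analysis at all, because every step is uniquely forced. If vertex $0$ is coloured $a$, its three out-neighbours $1,2,3$ receive three distinct colours from $N^+(a)=\{a,b,c\}$, hence exactly that set, and the unique common out-neighbour of $a,b,c$ in $T_5$ is $c$, so vertex $4$ is coloured $c$; viewing $T_5$ as the circulant with arcs $x \to x+1$ and $x \to x+2 \pmod 5$, each such block advances the colour by exactly $2$, so the chain $0 \to 4 \to 8 \to 12 \to 16 \to 0'$ carries $a \to c \to e \to b \to d \to a$, and the absence of rotation is just $5\cdot 2 \equiv 0 \pmod 5$. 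The paper runs precisely this computation, abbreviating the repetition with the automorphism $x \mapsto x+2$ (which maps $a$ to $c$ and $c$ to $e$) instead of re-tracing each block. Your copy-to-copy step is a sound minor variant: you invoke ``a vertex of $T_5$ is determined by its in-neighbourhood'' applied to vertex $0$ of $J_{v_{i+1}}$, which requires checking that this vertex has in-degree exactly three in $J_n$ (it does, and your observation that $N^-(a)=\{a,d,e\}$ coincides with the triple forced on $17,18,19$ is correct), whereas the paper simply colours $17,18,19$ with $N^+(d)=\{a,e,d\}$ and takes their unique common out-neighbour $a$, which needs no hypothesis on the in-degree of vertex $0$. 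Supply the five-step trace and your argument becomes the paper's proof.
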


\begin{proof}
	Since $T_5$ is vertex-transitive, assume without loss of generality that vertex $0$ in 	$J_{v_0}$ receives colour $a$. 
	If 0 is coloured $a$, then the vertices 1, 2 and 3 must be coloured $a,b$ and $c$ in some order, these vertices are the only out-neighbours of $a$ in $T_5$. Since vertex $c$ is the only common out-neighbour of vertices $a$, $b$ and $c$ in $T_5$ we have that vertex $4$ is coloured $c$. 
	Since the automorphism of $T_5$ that maps $a$ to $c$ also maps $c$ to $e$, we conclude by a similar argument that vertex $8$ is coloured $e$. 
	Similarly we conclude that vertex $12$ is coloured $b$ and vertex $16$ is coloured $d$. 
	
	Since vertex $16$ is coloured $d$ in $J_{v_0}$, vertices $17,18,19$ are coloured, in some order, $a,e,d$, as these are the only out-neighbours of $d$ in $T_5$.
	The only common out-neighbour of $a$, $e$ and $d$ in $T_5$ is $a$. 
	Therefore vertex $0$ in $J_{v_1}$ is coloured $a$. 
	Repeating this argument, we conclude that vertices 4, 8, 12 and 16 in $J_{v_2}$ receive colours $c$, $e$, $b$ and $d$, respectively. Continuing in this fashion gives that in an oriented ios-injective $T_5$-colouring of $J_n$ each of the vertices labeled $0$ (respectively, $4$, $8$, $12$ and $16$) receive the same colour.	
\end{proof}



\vspace{0.35cm}
\begin{theo} \label{thm:main5}
	The problem of ios-injective $T_5$-colouring is NP-complete.
\end{theo}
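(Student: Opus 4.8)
The plan is to give a polynomial transformation from ios-injective $C_3$-colouring, which is NP-complete by Theorem~\ref{rus1}; membership in NP is immediate, since an ios-injective $T_5$-colouring is a certificate of linear size whose validity can be checked in polynomial time, so only the transformation needs to be produced. The organising observation is that the three vertices $a,c,e$ induce a copy of $C_3$ inside $T_5$: we have the arcs $a\to c\to e\to a$ together with the three loops, and $\{a,c,e\}$ carries no other arcs of $T_5$. Hence, if every vertex that represents a vertex of $G$ can be forced to receive a colour in a fixed cyclic triangle, then the requirement that each arc of $J$ map to an arc of $T_5$ collapses to the requirement that the corresponding arc map to an arc of $C_3$, and the ios-injectivity conditions (injectivity on $N^+$ and on $N^-$ separately) transfer unchanged. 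The whole difficulty is therefore concentrated in (i) pinning the relevant colours to one cyclic triangle and (ii) wiring the gadgets so that the arc relation and the in/out-injectivity of $G$ are reproduced faithfully and without any extra constraint.

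First I would set up the construction of $J$ from $G$, using the gadget $J_v$ of Figure~\ref{VGT5IOS} as the basic building block and the cyclic arrangement $J_n$ together with Lemma~\ref{lem:loopLemma} as the source of rigidity. The content of Lemma~\ref{lem:loopLemma} is that, up to the rotational automorphism of $T_5$, an ios-injective $T_5$-colouring of such a chain is completely determined, with the distinguished vertices $0,4,8$ pinned onto the cyclic triangle $\{a,c,e\}$ and the vertices $12,16$ pinned onto $\{b,d\}$. This is exactly the propagation behaviour I would use to attach, to each vertex $v$ of $G$, a representative vertex whose admissible colours are forced to lie in $\{a,c,e\}$, while still allowing the three choices that correspond to the three colours of $C_3$. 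Each arc $uv$ of $G$ would then be encoded by arcs between the representatives of $u$ and $v$, arranged so that, over the palette $\{a,c,e\}$, any ios-injective $T_5$-colouring realises precisely the arc $f(u)\to f(v)$ of the triangle and enforces exactly the injectivity at the two endpoints that the arcs of $G$ demand.

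With the construction in place the proof divides into the two standard directions. For the forward direction, given an ios-injective $T_5$-colouring of $J$, I would invoke Lemma~\ref{lem:loopLemma} to fix the rigid parts of the construction, conclude that each representative vertex is coloured from $\{a,c,e\}$, and read off a map $V(G)\to\{a,c,e\}$; the arc and injectivity constraints built into $J$ then force this map to be an ios-injective $C_3$-colouring of $G$. For the reverse direction, given an ios-injective $C_3$-colouring of $G$, I would colour each gadget by the determined colouring guaranteed by Lemma~\ref{lem:loopLemma}, set each representative to the image of its vertex of $G$, and extend over the interior of every $J_v$. The main obstacle, exactly as in Lemma~\ref{lem:egadget1} for the $T_4$ case, is the verification that the gadgets are ``tight'': one must check that throughout each copy of $J_v$ the injectivity requirement on $N^+$ and on $N^-$ \emph{separately} can be met simultaneously for every admissible colour of its representative, and that the links encoding the arcs of $G$ neither forbid a legitimate triangle colour nor permit $b$ or $d$ to appear on a representative. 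Once this case analysis is carried out, the equivalence follows, and since $J$ has size linear in $G$ and is constructible in polynomial time, ios-injective $T_5$-colouring is NP-complete.
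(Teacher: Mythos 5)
Your proposal is correct and follows essentially the same route as the paper's proof: a reduction from ios-injective $C_3$-colouring (Theorem~\ref{rus1}) using the gadget $J_v$, the cyclic chain $J_{\nu_G}$, and the rigidity supplied by Lemma~\ref{lem:loopLemma}, with each vertex of $G$ attached to its gadget by a single arc so that its admissible colours are pinned to a cyclic triangle of $T_5$ and the arcs of $G$ then transfer directly; your choice of the triangle $\{a,c,e\}$ rather than the paper's $\{b,d,e\}$ (the paper attaches $v_i$ as an out-neighbour of vertex $11$ of $J_{v_i}$, whose colour is forced into $\{a,b,c\}$) is immaterial by the vertex-transitivity of $T_5$. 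The case analysis you defer is exactly what the paper supplies: a short argument ruling out colours $a$ and $c$ at $v_i$ via the injectivity constraints at vertices $11$ and $7$, and the explicit extension colourings of Figure~\ref{fig:t5gadg} for the converse direction.
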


\begin{proof}
	The transformation is from ios-injective $C_3$-colouring (See Theorem~\ref{rus1}).
	
	Let $G$ be a graph with vertex set $\{v_0,v_1, \dots, v_{|V(G)|-1}\}$.
	Let $\nu_G = |V(G)|$.
	We construct $J$ from $G$ by first adding a copy of $J_{\nu_G}$ to $G$  and then, for each $1 \leq i \leq \nu_G$, adding an arc from vertex $11$ in $J_{v_i}$ to $v_i$.
	
	
	We show that $J$ has an ios-injective $T_5$-colouring if and only if $G$ has an ios-injective $C_3$-colouring.
	
	Consider an ios-injective $T_5$-colouring of $J$. Since $T_5$ is vertex-transitive we can assume without loss of generality that vertex 8 in each copy of $J_v$ is coloured $a$. Therefore in each $J_{v_i}$, vertices $9,10$ and $11$ are coloured, in some order, with colours $a,b,c$, and vertex $12$ is coloured $c$.
	
	We claim that $v_i$ is coloured with $b$, $d$ or $e$ for all $0 \leq i \leq \nu_G-1$.
	If $v_i$ has colour $a$, then vertex $11$ in $J_{v_i}$ has both an in-neighbour and an out-neighbour coloured $a$ and is therefore coloured $a$. Thus, vertex $7$ in $J_{v_i}$ also has both an in-neighbour and an out-neighbour coloured $a$ and must be coloured $a$. Since vertex 11 already has an out-neighbour coloured $a$, this contradicts the injectivity requirement.
	If $v_i$ has colour $c$, then vertex $11$ in $J_{v_i}$ has two out-neighbours coloured $c$, a violation of the injectivity requirement.
	Therefore $v_i$ is coloured with one of $b$, $d$ or $e$ for each $0 \leq i \leq \nu_G-1$. 
	Since vertices $b$, $d$ and $e$ of $T_5$ induce a copy of $C_3$ in $T_5$, restricting   an ios-injective $T_5$-colouring of $J$ to the vertices of $G$ yields an ios-injective $C_3$-colouring of $G$.
	
	Let $\beta$ be an ios-injective $C_3$-colouring of $G$ using colours $b$, $d$ and $e$. We extend such a colouring to be an ios-injective $T_5$-colouring of $J$ by assigning the vertices of each $J_{v_i}$ colours based upon $\beta(v_i)$ as shown in Figure~\ref{fig:t5gadg}. This colouring satisfies the injectivity requirement, as each vertex $v_i$ has only neighbours coloured $b,d$ and $e$ in $G$, and its additional neighbour in $J_{v_i}$, vertex $11$, has colour $a$ or $c$.

	\begin{figure}[!ht]
	\begin{center}
			\includegraphics[width = .6\linewidth]{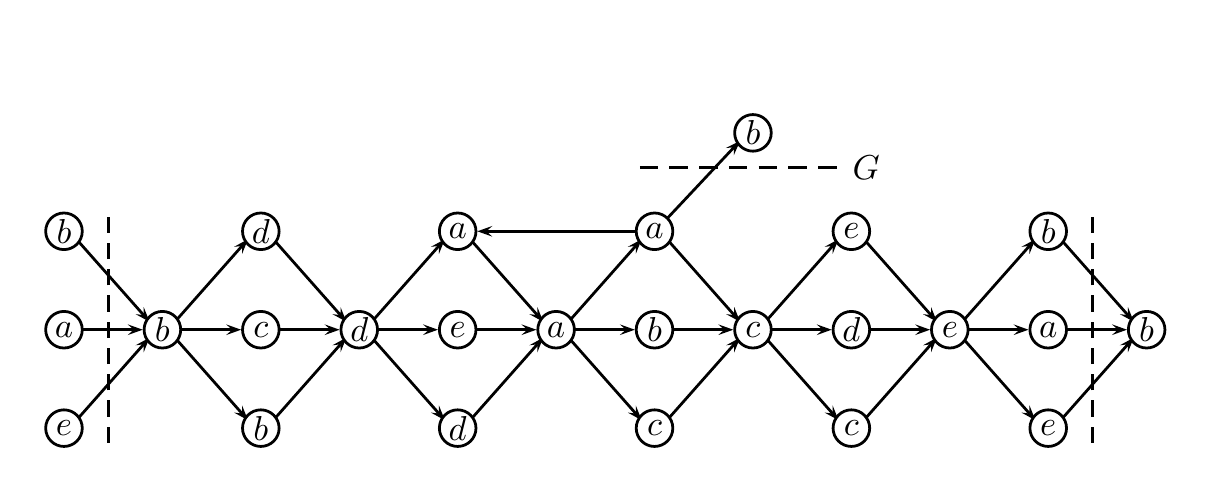}
		\includegraphics[width = .6\linewidth]{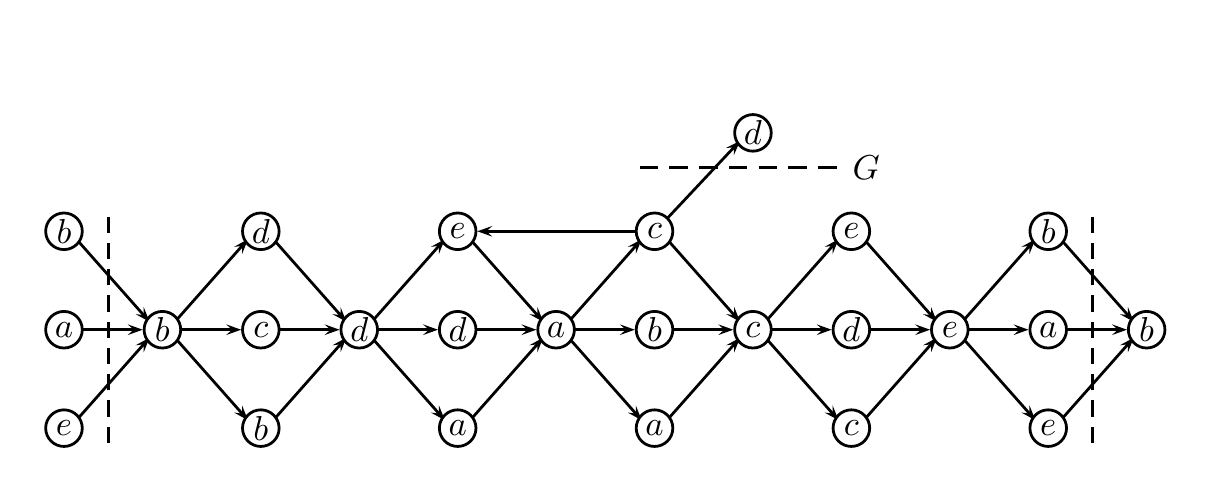}
		\includegraphics[width = .6\linewidth]{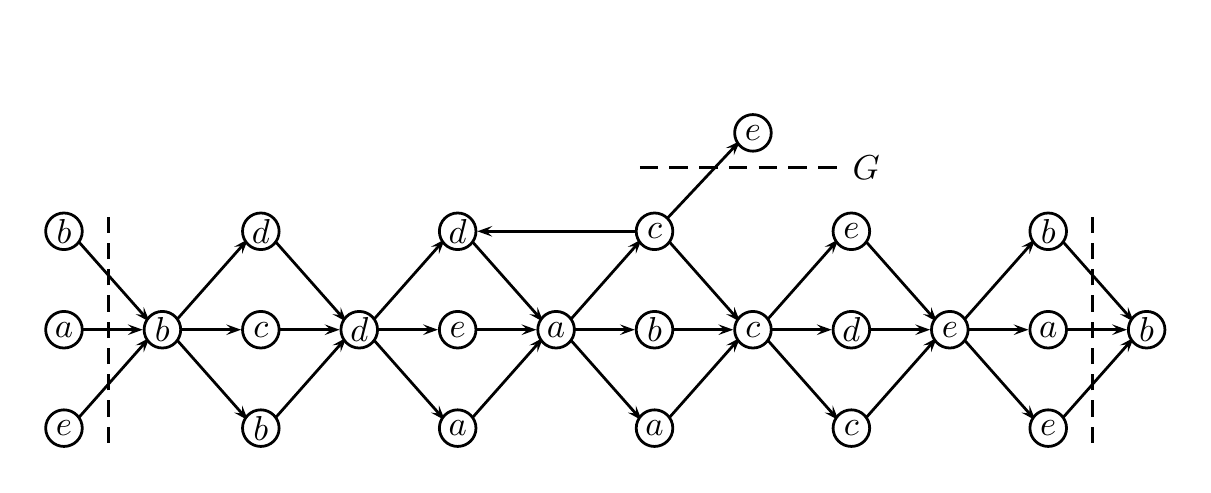}
		\caption{Colouring the vertices of $J_{v_i}$ using the colour of $v_i$.}
		\label{fig:t5gadg}
	\end{center}
	\end{figure}
	
	
	
	Therefore $J$ has an ios-injective $T_5$-colouring if and only if $G$ has an ios-injective $C_3$-colouring. 
	Since $J$ can be constructed in polynomial time, ios-injective $T_5$-colouring is NP-complete.
\end{proof}

We now present a reduction to instances of ios-injective $T$-colouring  for when $T$ has a vertex $v$ of out-degree  at least four. This reduction allows us to polynomially transform an instance of ios-injective $T$-colouring to an instance of ios-injective $T^\prime$-colouring, where $T^\prime$ is $T_4$, $T_5$, $C_3$ or $T\!T_3$.

\begin{lemma}
	\label{indios}
	If $T$ is a reflexive tournament on $n$ vertices with a vertex $v$ of out-degree at least four, then ios-injective homomorphism to $T'$ polynomially transforms to ios-injective homomorphism to $T$, where $T'$ is the tournament induced by the strict out-neighbourhood of $v$. 
\end{lemma}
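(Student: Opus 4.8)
The plan is a gadget reduction that starts from an arbitrary instance $G$ of ios-injective $T'$-colouring and builds, in polynomial time, an oriented graph $H$ so that $G$ admits an ios-injective homomorphism to $T'$ exactly when $H$ admits one to $T$. I would keep a copy of $G$ inside $H$ untouched, and design the remaining gadgetry with a single purpose: to confine the colour of each original vertex $x$ to $V(T') = N^+(v)\setminus\{v\}$. The mechanism is a \emph{hub}: a vertex $w_x$ that is forced to be coloured $v$, equipped with a second out-neighbour $s_x$ also forced to be coloured $v$, together with the arc $w_x\to x$ and no other out-arcs at $w_x$. Since $s_x$ and $x$ are both out-neighbours of $w_x$, ios-injectivity on $N^+(w_x)$ forces the colour of $x$ to differ from $v$; and since $x$ is an out-neighbour of the $v$-coloured $w_x$, its colour lies in $N^+(v)$. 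Hence the colour of $x$ lies in $N^+(v)\setminus\{v\}=V(T')$, as desired.

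Granting such hubs, both directions are routine. Given an ios-injective $T$-colouring of $H$, its restriction to $G$ lands in $V(T')$ by the previous paragraph; each arc of $G$ then maps to an arc of $T$ between two vertices of $V(T')$, hence to an arc of the induced sub-tournament $T'$; and the in- and out-neighbourhood injectivity at every $x$ in $G$ is inherited from $H$, the one new in-neighbour $w_x$ being coloured $v\notin V(T')$ and so never colliding with a $G$-neighbour of $x$. Conversely, an ios-injective $T'$-colouring of $G$ is extended by pasting in the forced colouring of every gadget; the only injectivity conditions that involve both $G$ and the gadgets occur at $x$ (whose extra in-neighbour is coloured $v$) and at each $w_x$ (whose out-neighbours $s_x,x$ are coloured $v$ and a colour in $V(T')$), and these hold for every admissible colour of $x$. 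As each gadget has size bounded in terms of the fixed tournament $T$, $H$ has $O(|V(G)|)$ vertices and is built in polynomial time, giving the required transformation.

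The substance of the proof is the construction of the hubs, and this is where I expect essentially all of the difficulty to lie. The hypothesis $|N^+(v)|\ge 4$ is what gets this off the ground: a vertex with four out-neighbours, which injectivity on its out-neighbourhood automatically colours distinctly, must be mapped to a vertex of out-degree at least four in $T$, so a four-leaf out-star already confines a hub to the high-out-degree vertices. The obstacle is to sharpen ``high out-degree'' to ``behaves like $v$''. One cannot hope to pin a hub to the single colour $v$: if $T$ is vertex-transitive (as $T_5$ is, which is exactly why it receives the separate treatment of Theorem~\ref{thm:main5}), composing any colouring with an automorphism shows that no absolute colour is forceable. I would therefore aim only to force each hub into the orbit of $v$ under the automorphisms of $T$ --- every member $u$ of which has $N^+(u)\setminus\{u\}$ isomorphic to $T'$ --- by augmenting the out-star with further constraints read off from the fixed structure of $T$, and then link all the hubs with equaliser gadgets in the spirit of Lemma~\ref{lem:egadget1} so that they share a single colour $u$. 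This linking step is essential: without it, adjacent vertices of $G$ could be pushed into the out-neighbourhoods of different orbit representatives, and an arc of $G$ need not map to an arc of $T'$. With a common $u$ fixed, the restriction of any ios-injective $T$-colouring of $H$ becomes, after the isomorphism $N^+(u)\setminus\{u\}\cong T'$, an ios-injective $T'$-colouring, and the reduction goes through.
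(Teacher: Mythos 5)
You have correctly identified the architecture of the paper's reduction --- keep $G$ intact, attach to each vertex a hub forced (up to automorphism) to the colour $v$, give the hub a second out-neighbour of the same colour so that injectivity on its out-neighbourhood excludes $v$ itself, and equalise all hubs so that every vertex of $G$ lands in a single copy of $T'$. But there is a genuine gap: the hub gadget, which you yourself flag as carrying ``essentially all of the difficulty,'' is never constructed. ``Augmenting the out-star with further constraints read off from the fixed structure of $T$'' and ``equaliser gadgets in the spirit of Lemma~\ref{lem:egadget1}'' are statements of intent, not arguments. A four-leaf out-star only confines the hub's colour to vertices $u$ of $T$ with $|N^+(u)|\geq 4$, and an arbitrary reflexive tournament may contain several such vertices in distinct orbits whose out-neighbourhoods induce pairwise non-isomorphic tournaments; you give no mechanism to separate them. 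Likewise, Lemma~\ref{lem:egadget1} is a $T_4$-specific gadget whose correctness rests on exhaustive case analysis in $T_4$, and nothing in your proposal produces an analogue for a general target $T$. Since both forward and backward directions of your equivalence rest on these unsupplied gadgets, the proof is incomplete precisely at its core.

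The paper closes exactly this gap with a single construction that discharges both tasks at once: a cyclic chain of \emph{irreflexive copies of $T$ itself}. One first shows (Claim 1 of the paper's proof) that in any ios-injective $T$-colouring every such copy is rainbow: if two vertices of a copy shared a colour $c$, then any third vertex would have both an in-neighbour and an out-neighbour coloured $c$, forcing its own colour to be $c$ (in a tournament with loops, only a vertex itself is simultaneously its in- and out-neighbour), and a fourth vertex --- which exists because the degree hypothesis gives $|V(T)|\geq 4$ --- would then have two like-coloured neighbours on the same side. A rainbow copy is an automorphic image of $T$, so the vertex $v_i$ playing the role of $v$ is \emph{automatically} confined to the orbit of $v$, with no ad hoc constraints needed; this is the orbit-forcing step you could not supply. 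Chaining the copies through intermediate vertices $x_i$ via arcs $v_i x_i$ and $x_i v_{i+1\pmod{\nu_G}}$ then does double duty: because every colour already appears on a neighbour of $v_i$ inside its own copy, injectivity forces $\phi(v_i)=\phi(x_i)=\phi(v_{i+1})$, equalising all hubs to a common colour (without loss of generality $v$); and $v_{i+1}$ is precisely the second out-neighbour of $x_i$ coloured $v$ --- your $s_x$ --- excluding the colour $v$ at $w_i$. Replacing your unspecified hub by this chain-of-copies gadget is what turns your outline into a proof.
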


\begin{proof}
	Let $T$ be a reflexive tournament on $n$ vertices with a vertex $v$ of out-degree at least four.
	Let $G$ be an oriented graph with vertex set $\{w_0,w_1,\dots,w_{|V(G)|-1}\}$.
	Let $\nu_G = |V(G)|$.
	We construct $H$ from $G$ by adding to $G$
	\begin{itemize}
		\item vertices $x_0,x_1, \dots,x_{\nu_G-1}$.
		\item an arc from $x_i$ to $w_i$ for all $0 \leq i \leq \nu_G-1$.
		\item  $\nu_G$ irreflexive copies of $T$, labeled $T_{i}$, for $0 \leq i \leq \nu_G-1$.
	\end{itemize}
	Let $v_i\in T_{i}$ be the vertex corresponding to $v \in V(T)$.
	We complete our construction by adding the arcs $v_ix_i$ and $x_i v_{i+1\pmod{\nu_G}}$ for all $i$.
	See Figure \ref{ioslem}.
	
	\begin{figure}
		\begin{center}
			
		\includegraphics{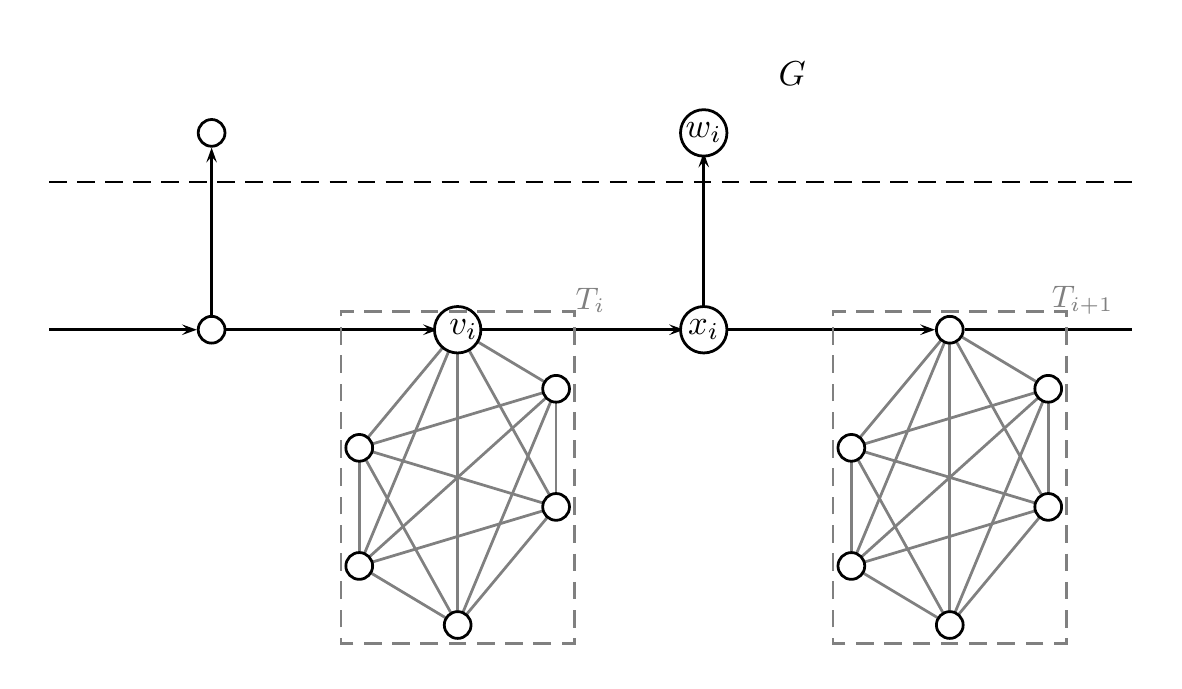}
		\caption{The construction of $H$ in Lemma \ref{indios}.}
		\label{ioslem}
		\end{center}
	\end{figure}
	
	\emph{Claim 1: In an ios-injective $T$-colouring of $H$ no two vertices of $T_{i}$ have the same colour}.
	Since $T$ has a vertex of out-degree at least four we observe that $T$ has at least $4$ vertices.
	Let $\phi$ be an ios-injective $T$-colouring of $H$.
	Suppose there exist $x,y \in T_{i}$ so that $\phi(x) = \phi(y) = c$, and let $z$ be a third vertex of $T_i$. 
	By injectivity, $x$ and $y$ cannot be either both in-neighbours or both out-neighbours of $z$, and therefore $z$ has an in-neighbour and an out-neighbour coloured $c$. This is only possible if $\phi(z) = c$.
	Let $w$ be a fourth vertex of $T_i$. Since $x$, $y$ and $z$ are all neighbours of $w$, $w$ has either two in-neighbours or two out-neighbours of the same colour, which is impossible in an ios-injective $T$-colouring. This proves the claim.
	
	
	
	

	\emph{Claim 2: In an ios-injective $T$-colouring of $H$, every vertex in the set $$\{x_0,x_1, \dots, x_{\nu_G-1}\} \cup \{  v_0,v_1, \dots, v_{\nu_G-1}  \}$$ receives the same colour}.
	By the previous claim all the colours of $T$ are used exactly once in each $T_{i}$.
	Therefore the only possible colour for an out-neighbour or an in-neighbour of $v_i$ outside of $T_i$ is $\phi(v_i)$. Therefore for each $i$, we have $\phi(v_i) = \phi(x_i) = \phi(x_{i+1 \pmod{\nu_G}})$.
	Therefore every vertex in the set $\{x_0,x_1, \dots, x_{\nu_G-1}\} \cup \{  v_0,v_1, \dots, v_{\nu_G-1}  \}$ receives the same colour.
	Since each vertex in $T_{i}$ maps to a unique vertex in $T$, if $\phi(v_i) \neq v$ then there is an automorphism of $T$ that maps $\phi(v_i)$ to $v$. As such, we may assume without loss of generality that $\phi(v_i) = v$ for all $0 \leq i \leq \nu_G-1$.

	Let $T^\prime$ be the reflexive sub-tournament of $T$ induced by the strict out-neighbourhood of $v$. Note that $T'$ is a reflexive tournament on  at least 3 vertices and  $T^\prime$ is a proper subgraph of $T$.
	
	We show that $H$ has an ios-injective $T$-colouring if and only if $G$ has an ios-injective $T'$-colouring.
	
	Let $\phi$ be an ios-injective $T$-colouring of $H$.
	By our previous claim, each $x_i$ has an in-neighbour and an out-neighbour with colour $v$, namely $v_i \in V(T_{i})$ and $v_{i-1} \in V(T_{{i-1}})$.
	Therefore $\phi(w_i)$ is an out-neighbour of $v$ in $T$.
	That is, $\phi(w_i) \in V(T^\prime)$.
	Therefore the restriction of $\phi$ to the vertices of $G$ yields an ios-injective $T^\prime$-colouring of $G$.

	Let $\beta$ be an ios-injective $T'$-colouring of $G$.
	For all $0 \leq i \leq \nu_G-1$ and all $u \in V(T)$. Let $u_i \in V(T_i)$ be the vertex corresponding to $u \in V(T)$.
	
	We  extend $\beta$ to be  an ios-injective $T$-colouring of $H$ as follows:
	\begin{itemize}
		\item $\beta(x_i) = \beta(v_i) =v$ for all $0 \leq i \leq \nu_G-1$; and
		\item for all $u_i \in T_{i}$, let $\beta(u_i) = u$.
	\end{itemize}
	
	Hence, ios-injective $T'$-colouring of $G$ can be polynomially transformed to ios-injective $T$-colouring of $H$. 
\end{proof}

If $T$ is a reflexive tournament with a vertex of in-degree at least 4, a similar argument holds. We modify the construction by reversing the arc between $x_i$ and $w_i$ in the construction of $H$.

\begin{lemma}
	\label{indios2}
	If $T$ is a reflexive tournament on $n$ vertices with a vertex $v$ of in-degree at least four, then ios-injective homomorphism to $T'$ polynomially transforms to ios-injective homomorphism to $T$, where $T'$ is the tournament induced by the strict in-neighbourhood of $v$. 
\end{lemma}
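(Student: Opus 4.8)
The plan is to mirror the construction and argument of Lemma~\ref{indios}, replacing the role of the out-neighbourhood by the in-neighbourhood through a single arc reversal. Given an oriented graph $G$ with vertex set $\{w_0,\dots,w_{\nu_G-1}\}$, I would build $H$ exactly as in Lemma~\ref{indios}: add the auxiliary vertices $x_0,\dots,x_{\nu_G-1}$, the $\nu_G$ irreflexive copies $T_0,\dots,T_{\nu_G-1}$ of $T$ with distinguished vertices $v_i$ corresponding to $v$, and the linking arcs $v_i x_i$ and $x_i v_{i+1\pmod{\nu_G}}$. The only modification is to reverse the arc between $x_i$ and $w_i$, so that $H$ contains $w_i x_i$ in place of $x_i w_i$.

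The first two claims then carry over verbatim. Claim~1 of Lemma~\ref{indios}---that in any ios-injective $T$-colouring $\phi$ of $H$ the restriction of $\phi$ to each $T_i$ is injective---uses only that $T$ has at least four vertices together with the injectivity requirement inside a copy of $T$; since $v$ has in-degree at least four, $T$ again has at least four vertices, and the arc between $x_i$ and $w_i$ plays no part in the argument, so it is unchanged. Claim~2---that all the $x_i$ and all the $v_i$ receive a common colour, which we may take to be $v$---depends only on the linking arcs $v_i x_i$ and $x_i v_{i+1}$, which are untouched, so it holds as stated. I may therefore again assume $\phi(v_i)=v$ for every $i$, and I take $T'$ to be the reflexive subtournament of $T$ induced by the \emph{strict in-neighbourhood} of $v$; since $v$ has in-degree at least four, $T'$ is a reflexive tournament on at least three vertices and is a proper subtournament of $T$.

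For the forward direction, the arc reversal is precisely what turns the relevant neighbourhood from out to in: now both $v_i$ and $w_i$ are in-neighbours of $x_i$, and since $\phi(x_i)=v=\phi(v_i)$, the injectivity of $\phi$ on the in-neighbourhood of $x_i$ forces $\phi(w_i)\neq v$ and $\phi(w_i)\in N^-(v)$, that is $\phi(w_i)\in V(T')$. Hence the restriction of $\phi$ to $V(G)$ is an ios-injective $T'$-colouring of $G$. For the converse, given an ios-injective $T'$-colouring $\beta$ of $G$, I extend it by setting $\beta(x_i)=\beta(v_i)=v$ and $\beta(u_i)=u$ for each $u_i\in T_i$, exactly as in Lemma~\ref{indios}. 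All arcs are preserved because $T$ is reflexive (so the loop at $v$ accounts for the linking arcs) and because every colour used on $V(G)$ lies in the strict in-neighbourhood of $v$, so each reversed arc $w_i x_i$ maps to an arc into $v$.

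The only point requiring genuine, if brief, verification---and the step I expect to require the most care---is that this extension still meets the injectivity requirement at every vertex of $G$: adding $x_i$ (coloured $v$) as an out-neighbour of $w_i$ could in principle collide with another out-neighbour, but since $\beta$ colours $V(G)$ using only colours of $V(T')$ and $v\notin V(T')$, no neighbour of $w_i$ in $G$ is coloured $v$, so no collision occurs; injectivity at the $v_i$ and $x_i$ is checked just as in Lemma~\ref{indios}. As the construction is carried out in polynomial time, this establishes the claimed polynomial transformation.
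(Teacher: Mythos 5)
Your proposal is correct and takes essentially the same route as the paper, which proves Lemma~\ref{indios2} simply by remarking that the argument of Lemma~\ref{indios} goes through after reversing the arc between $x_i$ and $w_i$ --- exactly the modification you make. Your write-up in fact supplies the verification (Claims 1 and 2 carrying over, injectivity at $x_i$ forcing $\phi(w_i)$ into the strict in-neighbourhood, and the collision check at $w_i$ in the converse extension) that the paper leaves implicit.
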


Our results compile to give a dichotomy theorem.

\begin{theo}
	\label{iosdichot}
	Let $T$ be a reflexive tournament. If $T$ has at least $3$ vertices, then the problem of deciding whether a given oriented graph $G$ has an ios-injective homomorphism to $T$ is NP-complete. If $T$ has $1$ or $2$ vertices, then the problem is solvable in polynomial time.
\end{theo}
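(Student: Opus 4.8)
The plan is to split on $n := |V(T)|$. When $T$ has at most two vertices the polynomial-time claim is exactly the first assertion of Theorem~\ref{rus1}, so nothing further is needed there; the content is the NP-completeness claim for $n \ge 3$. Membership in NP is immediate, since a candidate colouring can be verified in polynomial time by checking that it is a homomorphism and that the injectivity requirement holds at each vertex, so throughout it suffices to exhibit NP-hardness. I would establish NP-hardness by strong induction on $n$, using the reduction Lemmas~\ref{indios} and~\ref{indios2} to peel $T$ down to one of the four hard base targets $C_3$, $T\!T_3$, $T_4$, $T_5$.

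For the inductive step I would first dispose of the cases where a reduction is available. If $T$ has a vertex $v$ of out-degree at least four, then Lemma~\ref{indios} gives a polynomial transformation from ios-injective $T'$-colouring to ios-injective $T$-colouring, where $T'$ is the reflexive tournament induced by the strict out-neighbourhood of $v$. Because degrees here are counted with the loop included, out-degree at least four means strict out-degree at least three, so $T'$ is a reflexive tournament with $3 \le |V(T')| < n$; the induction hypothesis makes ios-injective $T'$-colouring NP-complete, and the transformation transfers this to $T$. The case of a vertex of in-degree at least four is identical, via Lemma~\ref{indios2} applied to the strict in-neighbourhood.

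The heart of the argument is the base case, namely the situation where no reduction applies because every vertex of $T$ has both out-degree and in-degree at most three. I would convert this into a counting statement: discounting the loop, it says every vertex has strict out-degree and strict in-degree at most two, and since these sum to $n-1$ we get $n-1 \le 4$, hence $3 \le n \le 5$. I then want to identify all such tournaments up to isomorphism. For $n=3$ they are $C_3$ and $T\!T_3$, both NP-complete by Theorem~\ref{rus1}. For $n=4$ the constraint forces the out-degree sequence (discounting loops) to be $(1,1,2,2)$, whose unique realisation is the strongly connected tournament $T_4$, NP-complete by Theorem~\ref{thm:main4}. For $n=5$ every vertex is forced to have strict out-degree and in-degree exactly two, so $T$ is the unique regular tournament $T_5$, NP-complete by Theorem~\ref{thm:main5}. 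This exhausts the non-reducible tournaments and closes the induction, and combined with the $n\le 2$ case it yields the stated dichotomy.

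I expect the main obstacle to be bookkeeping rather than depth. One must be careful that degrees are measured with the loop included, so that ``out-degree at least four'' is exactly the trigger making the strict out-neighbourhood $T'$ a reflexive tournament on at least three vertices, keeping $T'$ inside the scope of the induction hypothesis while remaining strictly smaller than $T$. The one place a subtlety could hide is verifying that each reduction really does return a reflexive tournament on at least three vertices; this is guaranteed by the closing remark in the proof of Lemma~\ref{indios}. The other point requiring care is checking that the degree count genuinely pins the base tournaments down, up to isomorphism, as precisely $C_3$, $T\!T_3$, $T_4$, and $T_5$, rather than leaving an unhandled small tournament.
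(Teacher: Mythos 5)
Your proposal is correct and takes essentially the same route as the paper: repeated application of Lemmas~\ref{indios} and~\ref{indios2} (which you organise as a strong induction, where the paper simply says ``by repeated application'') reduces every reflexive tournament on at least four vertices to one of the hard base targets $C_3$, $T\!T_3$, $T_4$, $T_5$, which are then handled by Theorem~\ref{rus1} and Theorems~\ref{thm:main4} and~\ref{thm:main5}. The only difference is cosmetic and in your favour: where the paper identifies $T_4$ and $T_5$ as the only irreducible targets by inspection of the $16$ reflexive tournaments on $4$ or $5$ vertices, plus an average-out-degree bound for $n>5$, you derive the bound $n\le 5$ directly from the strict-degree count $n-1\le 4$ and pin down the base cases via the score sequences $(1,1,2,2)$ and $(2,2,2,2,2)$, making the classification step slightly more self-contained.
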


\begin{proof}
	If $T$ is a reflexive tournament on no more than three vertices, the result follows by Theorem \ref{rus1}. Suppose then that $T$ has four or more vertices. 
	If $T = T_4$, or if $T = T_5$, then the result follows from Theorem \ref{thm:main4} or Theorem \ref{thm:main5}. Up to isomorphism, there are $16$ distinct reflexive tournaments on $4$ or $5$ vertices. By inspection, tournaments $T_4$ and $T_5$ respectively are the only reflexive tournaments on $4$ and $5$ vertices respectively with no vertex of out-degree or in-degree four.
	Since the average out-degree of a reflexive tournament on $n>5$ vertices is $\frac {n-1} {2} +1> 3$, every reflexive tournament on at least six vertices has a vertex with out-degree at least four.
	Therefore if $T$ has at least four vertices, $T \neq T_4$ and $T \neq T_5$, then $T$ has  a vertex with either in-degree or out-degree at least four. 
	By repeated application of Lemma~\ref{indios} and Lemma~\ref{indios2} an instance of ios-injective homomorphism to $T$ polynomially transforms to instance of either ios-injective homomorphism to $T_4$, ios-injective homomorphism to $T_5$ or ios-injective homomorphism to a target on $3$ vertices. 
\end{proof}

\section{Iot-injective homomorphisms}\label{sec:iot}


In this section we prove a dichotomy theorem for iot-injective $T$-colouring, where $T$ is a reflexive tournament. 
We employ similar methods as in Section \ref{sec:ios}.
We first show that both iot-injective $T_4$-colouring and iot-injective $T_5$-colouring are NP-complete.
We  then provide a reduction to instances of iot-injective $T$-colouring to either  iot-injective $T_4$-colouring, iot-injective $T_5$-colouring, or a case covered by Theorem \ref{rus1}.
Combining these results yields the desired dichotomy theorem.

We begin with a study of iot-injective $T_4$-colouring.
To show iot-injective $T_4$-colouring is NP-complete we provide a transformation from 3-edge-colouring.
We construct an oriented graph $F$ from a  graph $G$ so that $G$ has a 3-edge-colouring if and only if $F$ admits an iot-injective homomorphism to $T_4$.
The key ingredients in this construction are the pair of oriented graphs $F_x$ and $F_{e}$, shown in Figure \ref{IOTT4VG}.

\begin{figure}[!ht]
\begin{center}
		\includegraphics{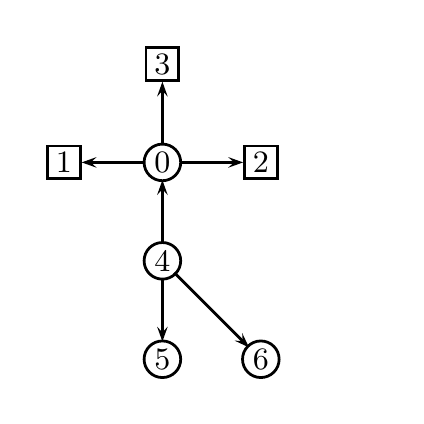}
	\includegraphics{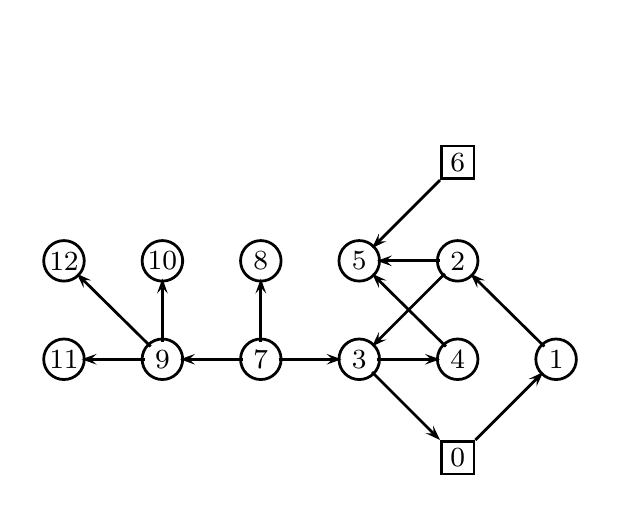}
	\caption{$F_x$ and $F_e$, respectively.}
	\label{IOTT4VG}
\end{center}
\end{figure}


\begin{lemma} \label{lem:vxgadgetIOT}
	In any iot-injective $T_4$-colouring of $F_x$, vertex $0$ is coloured $b$ and vertex $4$ is coloured $a$.
\end{lemma}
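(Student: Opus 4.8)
The plan is to analyze the colouring constraints imposed by $T_4$'s degree sequence, exactly as in the proof of Lemma~\ref{lem:vxgadget1}. The tournament $T_4$ is the unique strongly connected reflexive tournament on four vertices, so its vertices $a,b,c,d$ have a rigid out-degree structure: recall from the proof of Lemma~\ref{lem:vxgadget1} that only two vertices of $T_4$ have out-degree $3$, and similarly only two have in-degree $3$. The difference from the ios case is that here injectivity is required on the \emph{combined} neighbourhood $N^+(v)\cup N^-(v)$ rather than on $N^+$ and $N^-$ separately, so a vertex coloured $c$ that is simultaneously an in-neighbour and an out-neighbour of some vertex $w$ now forces $w$ to be coloured $c$ as well (since the single copy of colour $c$ appearing in both neighbourhoods would otherwise be a repeat in the union).

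First I would read off from Figure~\ref{IOTT4VG} the local structure of $F_x$ around vertex $0$ and vertex $4$: their in- and out-degrees in $F_x$, and which vertices are shared in- and out-neighbours. The key is to identify a vertex of $F_x$ whose out-degree (or in-degree) equals $3$, forcing its colour into the small set of $T_4$-vertices with that degree, and then propagate. I expect the argument to proceed by first pinning down the colour of a high-degree ``anchor'' vertex, then using the iot-injectivity at a vertex that has a given colour appearing in both its in- and out-neighbourhoods to force equality of colours, and finally deducing that vertex $0$ must be $b$ and vertex $4$ must be $a$ by elimination of the other three colours at each. Where the earlier lemma used the fact that no out-degree-$3$ vertex has both $d$ and $a$ as out-neighbours, I would look for the analogous forbidden configurations in $T_4$ and match them against the adjacencies visible in the figure.

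Concretely, the steps I anticipate are: (i) determine the forced colour of whichever vertices of $F_x$ have out-degree or in-degree three in $F_x$, using that $T_4$ has exactly two vertices of each such degree and a known arc pattern among $\{a,b,c,d\}$; (ii) rule out colours for vertex $0$ by showing it has a neighbour configuration incompatible with $a$, $c$, and $d$ under iot-injectivity, leaving $b$; (iii) once vertex $0$ is coloured $b$, trace the arcs between $0$ and $4$ together with their common neighbours to force vertex $4$ to be $a$, using that $a$ is the unique out-neighbour (or in-neighbour) of $b$ in $T_4$ consistent with $4$'s remaining neighbours and the injectivity of the combined neighbourhood.

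The main obstacle will be step (i)--(ii): correctly reading the combined-neighbourhood injectivity constraints off the figure and verifying that the candidate colours $a$, $c$, $d$ for vertex $0$ each lead to a clash in $N^+(0)\cup N^-(0)$ or in the union-neighbourhood of one of its neighbours. Unlike ios-injectivity, where in- and out-repeats are independently forbidden, here I must be careful that a colour may legitimately appear once in the in-neighbourhood \emph{and} once in the out-neighbourhood only if it is the \emph{same} occurrence (i.e.\ a vertex joined to $v$ by arcs in both directions is impossible in an oriented graph, so in practice each neighbour contributes once). Keeping this bookkeeping straight, and confirming that the specific gadget $F_x$ has been designed so that exactly the colouring $0\mapsto b$, $4\mapsto a$ survives, is the delicate part; the rest is routine propagation along the arcs shown in Figure~\ref{IOTT4VG}.
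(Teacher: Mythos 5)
There is a genuine gap, and it sits exactly where you flagged the delicacy: the mechanism you propose for exploiting iot-injectivity is mis-stated, and the step it would have to replace is the crux of the paper's proof. You assert that a colour $c$ appearing in both the in- and out-neighbourhood of a vertex $w$ ``forces $w$ to be coloured $c$.'' That is the \emph{ios} forcing rule (as used in the proof of Lemma~\ref{lem:vxgadget1}, where $\phi(w)$ must then be both an out-neighbour and an in-neighbour of $c$ in $T_4$, hence equal to $c$). Under iot-injectivity such a configuration is simply forbidden outright: the two occurrences of $c$ lie on distinct vertices of $N^+(w)\cup N^-(w)$ (a vertex cannot be joined to $w$ in both directions in an oriented graph, as you yourself note), so the restriction to the combined neighbourhood repeats a colour no matter what colour $w$ receives; recolouring $w$ does not repair the repeat. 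So the propagation tool you plan to use does not exist in the iot setting, and what survives of it is a contradiction rule, not a forcing rule.

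Concretely, your steps (i)--(iii) narrow $\phi(0)$ and $\phi(4)$ to $\{a,b\}$ (both vertices have out-degree $3$ in $F_x$, and $a,b$ are the only vertices of $T_4$ with three out-neighbours), and the arc from $4$ to $0$ rules out $(\phi(4),\phi(0))=(b,a)$; but the pairs $(a,a)$ and $(b,b)$ remain consistent with the homomorphism condition because $T_4$ is reflexive, and nothing in your plan excludes them. The paper's proof supplies the missing pigeonhole: the three out-neighbours of vertex $0$ must receive distinct colours, all lying in $N^+_{T_4}(\phi(0))$, and an out-degree-$3$ vertex of $T_4$ has exactly three out-neighbours \emph{including itself}; hence the colour $\phi(0)$ must appear on one of vertex $0$'s out-neighbours, and iot-injectivity then forbids $\phi(0)$ on any in-neighbour of vertex $0$, in particular on vertex $4$. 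Thus $\phi(4)\neq\phi(0)$, leaving $(\phi(4),\phi(0))=(a,b)$ as the only pair for which $\phi(4)\phi(0)$ is an arc of $T_4$. Without this step (or an equivalent one) the elimination in your (ii)--(iii) cannot be completed.
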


\begin{proof}
	Consider some iot-injective $T_4$-colouring of $F$.
	Vertex 0 of $F_x$ has out-degree $3$. 
	Since each vertex of $T_4$ has at most three out-neighbours (including itself), vertex $0$ must have the same colour as one of its out-neighbours.
	To satisfy the injectivity constraint, if a colour appears on an out-neighbour of vertex $0$, that colour cannot appear on an in-neighbour of vertex $0$.
	Therefore vertex $4$ does not have the same colour as vertex $0$.
	Both vertices $4$ and $0$ have out-degree $3$, and there is an arc from $4$ to $0$.
	Vertex $a$ in $T_4$ is the only vertex to have out-degree $3$ and have a strict out-neighbour with out-degree $3$.
	Therefore vertex $4$ is coloured $a$ and vertex $0$ is coloured $b$.
\end{proof}

\begin{lemma}  \label{lem:edgegadgetIOT1}
	In any iot-injective $T_4$-colouring of $F_{e}$, vertex $7$ is coloured $a$ and vertex $9$ is coloured $b$.
\end{lemma}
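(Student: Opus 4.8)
The plan is to reuse the short degree-counting engine from Lemma~\ref{lem:vxgadgetIOT}, since the target $T_4$ is extremely rigid with respect to degrees. First I would record the facts I need about $T_4$: counting the loop, the only vertices of out-degree $3$ are $a$ and $b$, the only vertices of in-degree $3$ are $c$ and $d$, we have $a\to b$ while $b\not\to a$, and $a$ is the unique out-degree-$3$ vertex whose strict out-neighbourhood again contains an out-degree-$3$ vertex (namely $b$). The engine is then: if a vertex $u$ of $F_e$ has three out-neighbours, their images are distinct out-neighbours of $f(u)$ by iot-injectivity, and since $f(u)$ has at most three out-neighbours in $T_4$ including itself, these three colours exhaust $N^+(f(u))\ni f(u)$; hence $f(u)\in\{a,b\}$ and the colour $f(u)$ already appears on an out-neighbour of $u$. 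The dual statement forces $f(u)\in\{c,d\}$ when $u$ has three in-neighbours.

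With this in hand I would locate in $F_e$ (Figure~\ref{IOTT4VG}) the out-degree-$3$ vertices and the arcs joining them, expecting vertices $7$ and $9$ (or a short chain terminating at them) to have out-degree $3$, so that $f(7),f(9)\in\{a,b\}$. The separating ingredient is the iot constraint exactly as used for $F_x$: at a vertex $v$, a colour occurring on an out-neighbour of $v$ cannot also occur on an in-neighbour of $v$. Applied at the right place this should give $f(7)\neq f(9)$, so $\{f(7),f(9)\}=\{a,b\}$; the orientation of the connecting arc together with $a\to b$ and $b\not\to a$ then pins $f(7)=a$ and $f(9)=b$.

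The step I expect to be the main obstacle is confirming that this colour-separation genuinely applies, that is, that $7$ and $9$ (or the intermediate vertices of the chain) sit on opposite sides of some vertex so that distinctness is forced, rather than both being out-neighbours (or both in-neighbours) of a common vertex. If $7$ and $9$ are not adjacent by an arc of the convenient orientation, I would instead propagate colours through $F_e$ in the style of the case analysis of Lemma~\ref{lem:egadget1}: starting from whichever vertex is pinned first, use at each step that out-degree-$3$ (resp.\ in-degree-$3$) vertices only receive colours in $\{a,b\}$ (resp.\ $\{c,d\}$) and that iot-injectivity forbids a colour on both an in- and an out-neighbour, and eliminate the stray cases $f(7)=b$ and $f(9)=a$ by driving them to a forbidden arc of $T_4$ (for instance an out-neighbour of $d$ forced to receive colour $c$, impossible since $N^+(d)=\{a,d\}$). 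This leaves $f(7)=a$ and $f(9)=b$, as claimed.
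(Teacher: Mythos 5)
Your proposal is correct and takes essentially the same route as the paper, which omits this proof with the remark that it follows as in the proof of Lemma~\ref{lem:vxgadgetIOT}: vertices $7$ and $9$ of $F_e$ both have out-degree three with an arc from $7$ to $9$, so your degree-counting engine (out-degree-$3$ vertices must be coloured $a$ or $b$, the vertex's own colour is forced onto an out-neighbour, hence cannot recur on an in-neighbour) together with $a\to b$ and $b\not\to a$ pins $f(7)=a$ and $f(9)=b$ exactly as intended. All the facts you record about $T_4$ check against those used elsewhere in the paper, so your fallback propagation argument is not needed.
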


This proof of this lemma follows similarly to the proof of Lemma \ref{lem:vxgadgetIOT}. As such, it is omitted.

\begin{lemma}\label{lem:edgegadgetIOT2}
	Let $F_e^\prime$ be the oriented graph formed from a copy of $F_{e}$ and two copies of $F_x$ by identifying vertex $0$ in the copy of $F_e$ with any square vertex in one copy of $F_x$ and identifying vertex $6$ in the copy of $F_e$ with any square vertex in the second copy of $F_x$. In any iot-injective $T_4$-colouring of $F_e^\prime$, the vertices $0$ and $6$ in the subgraph induced by $F_e$ have the same colour, and are coloured with one of $b$, $c$ or $d$.
\end{lemma}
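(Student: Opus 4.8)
The plan is to follow the template of the proof of Lemma~\ref{lem:egadget1} from the ios section, replacing the separate in/out injectivity argument by the combined-neighbourhood condition of the iot problem. Throughout I would use the arc structure of $T_4$, namely the strict out-neighbourhoods $a\to\{b,c\}$, $b\to\{c,d\}$, $c\to\{d\}$ and $d\to\{a\}$ together with a loop at every vertex, so that $a$ and $b$ are precisely the vertices of out-degree three (counting the loop).

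First I would dispose of the second assertion, that neither identified vertex is coloured $a$. By Lemma~\ref{lem:vxgadgetIOT}, every iot-injective $T_4$-colouring of a copy of $F_x$ colours vertex $4$ with $a$, and propagating the forced colours through $F_x$ shows that each square vertex and some $a$-coloured vertex share a common in-neighbour $p$. If a square vertex were coloured $a$, then $p$ would have two out-neighbours coloured $a$, violating injectivity on its combined neighbourhood $N^+(p)\cup N^-(p)$. Hence, once vertices $0$ and $6$ of $F_e$ are identified with square vertices, they must be coloured from $\{b,c,d\}$, which is the second claim.

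Next I would prove that vertices $0$ and $6$ receive the same colour by a case analysis on the colour of vertex $0$, which lies in $\{b,c,d\}$. In each case I would propagate forced colours along the internal vertices of $F_e$, anchored by Lemma~\ref{lem:edgegadgetIOT1}, which pins vertex $7$ to $a$ and vertex $9$ to $b$. At each vertex the out-neighbours (respectively in-neighbours) must receive distinct colours taken from the out-neighbourhood (respectively in-neighbourhood) of its image in $T_4$, and no colour may be repeated across the whole combined neighbourhood. Starting from the colour of vertex $0$ and pushing these constraints towards vertex $6$, each branch should either force the colour of vertex $6$ directly or collide with the anchored values at vertices $7$ and $9$ or with the exclusion of $a$ at vertex $6$, leaving equality of the two colours as the only surviving possibility.

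The main obstacle will be the bookkeeping in this case analysis. Because iot-injectivity restricts the colours on $N^+(v)\cup N^-(v)$ simultaneously, I must track the full set of neighbour colours at each vertex rather than the two independent sets used in the ios proof, so the propagation is tighter and admits fewer shortcuts. The delicate point is to confirm, in the branches where the colour of vertex $6$ is not immediately determined, that reconciling the propagated colours with $\phi(7)=a$, $\phi(9)=b$ and the exclusion of $a$ at vertex $6$ forces vertices $0$ and $6$ to agree; I expect this to run parallel, case by case, to the three-way argument carried out for $H_e$ in Lemma~\ref{lem:egadget1}.
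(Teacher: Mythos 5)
Your overall plan for the main claim coincides with the paper's proof: a three-way case analysis on the colour of vertex $0$ of $F_e$ (restricted to $\{b,c,d\}$), propagating forced colours through the interior vertices $1$--$5$ of $F_e$, anchored by Lemma~\ref{lem:edgegadgetIOT1}'s conclusion that vertex $7$ is coloured $a$. In the paper's execution each case forces the colour of vertex $6$ outright ($b$, $c$, $d$ respectively, matching vertex $0$); no ``collision'' branches survive, just as you anticipate. Vertex $9$'s colour, which you also list as an anchor, is in fact never needed in the case analysis --- only $\phi(7)=a$ is used.

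There is, however, a genuine gap in your preliminary step excluding colour $a$ at the identified vertices. You argue by injectivity: that each square vertex of $F_x$ shares a common in-neighbour $p$ with some $a$-coloured vertex, so that colouring the square vertex $a$ would place two $a$'s in the combined neighbourhood of $p$. Nothing in Lemma~\ref{lem:vxgadgetIOT} or in the structure of $F_x$ supplies such a $p$: the lemma pins only vertex $4$ to $a$ and vertex $0$ to $b$, vertex $4$ is an in-neighbour of vertex $0$ (not of the square vertices), and the square vertices are the out-neighbours of vertex $0$ --- so the only vertex known to be coloured $a$ is not positioned as your argument requires, and asserting that some out-neighbour of a common in-neighbour is coloured $a$ is circular with what you are trying to refute. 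The correct argument, which is the paper's, needs no injectivity at all: once vertices $0$ and $6$ of $F_e$ are identified with square vertices, each acquires an in-neighbour coloured $b$ (vertex $0$ in a copy of $F_x$), and in $T_4$ the in-neighbours of $a$ are only $a$ and $d$; arc preservation alone therefore forbids colour $a$. This $b$-coloured in-neighbour should also be retained for the case analysis itself, where the paper reuses it (for instance, to rule out colour $b$ at vertex $3$ in Cases I and III), rather than being discarded after the exclusion step.
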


\begin{proof}
	Let $F_e^\prime$ be constructed as described. 
	Consider an iot-injective $T_4$-colouring of $F^\prime_e$. 
	We examine the colours of the vertices in the subgraph induced by the copy of  $F_e$. 
	By Lemma \ref{lem:vxgadgetIOT} and the construction of $F_e^\prime$, vertices $0$ and $6$ in $F_e$ have an in-neighbour coloured $b$ -- vertex $0$ in a copy of $F_x$.
	Since  $b$ is not an in-neighbour of $a$ in $T_4$, vertex $0$ in the copy of $F_e$ must receive one of the colours $b$, $c$, or $d$.
	We proceed in cases based on the possible colour of vertex $0$ in copy of $F_e$.
	
	\emph{Case I: Vertex $0$ is coloured $b$.}
	Since vertex 0 already has a neighbour coloured $b$  (vertex $0$ in a copy of $F_x$),  vertex $3$, an in-neighbour of vertex $0$ in $F_e$, cannot be coloured $b$.
	Since $b \in V(T_4)$ has only $a$ and $b$ as in-neighbours, we have that vertex $3$ is coloured $a$.
	By Lemma \ref{lem:edgegadgetIOT1} vertex $3$ has a neighbour coloured $a$ -- namely vertex $7$.
	By the injectivity constraint, this colour cannot appear on any other neighbour of vertex $3$. 
	As such, vertices $2$ and $4$ are coloured $d$ and $c$ respectively.
	The only common out-neighbour of $d$ and $c$ in $T_4$ is $d$.
	Therefore vertex $5$ has colour $d$.
	In $T_4$ vertex $d$ has three in-neighbours -- $b$, $c$ and $d$.
	Since $c$ and $d$ both appear on an in-neighbour of vertex $5$, we have that vertex $6$ is coloured with $b$.
	
	\emph{Case II: Vertex $0$ is coloured $c$.}
	Vertex $c$ in $T_4$ has three in-neighbours: $a$, $b$ and $c$.
	Since vertex $0$ has an in-neighbour coloured $b$, namely vertex $0$ in a copy of $F_x$, vertex $3$ in $F_e$ must have either colour $a$ or colour $c$.
	
	Assume vertex $3$ is coloured $c$. 
	In this case, the injectivity constraint implies that vertex $1$ is not coloured $c$.
	Since $c$ and $d$ are the only out-neighbours of $c$ in $T_4$, vertex $1$ must be coloured $d$.
	Vertex $2$, an in-neighbour of vertex $3$, is coloured with one of $a,b$ or $c$, the in-neighbours of $c$ in $T_4$.
	By Lemma \ref{lem:edgegadgetIOT1} vertex $3$ has a neighbour coloured $a$ --  vertex $7$.
	By assumption vertex $0$ has colour $c$.
	Therefore by injectivity, vertex $2$ has colour $b$.
	This is a contradiction, as the arc between vertex $1$ and vertex $2$ does not have the same direction as the arc between vertex $c$ and vertex $b$ in $T_4$.
	Therefore vertex $3$ is coloured $a$.
	
	In $T_4$, the in-neighbours of $a$ are $a$ and $d$, and the out-neighbours of $a$ are $a,b$ and $c$.
	Since vertex $7$ is coloured $a$, no other neighbour of vertex $3$ can be coloured $a$.
	Therefore vertex $2$, an in-neighbour of vertex $3$, must have colour $d$.
	Since vertex $0$ is coloured $c$, vertex $4$, an out-neighbour of vertex $3$ must have colour $b$.
	Vertex $5$, a common in-neighbour of  vertices $2$ and $4$, must be coloured with a common in-neighbour of $d$ and $b$ in $T_4$.
	The only such vertex in $T_4$ is $d$.
	Therefore vertex $5$ has colour $d$.
	
	Vertex $d$ in $T_4$ has three in-neighbours: $b$, $c$ and $d$.
	Since vertex $2$ is coloured $d$ and vertex $4$ is coloured $b$, we have that vertex $6$ is coloured $c$, as required.
	
	\emph{Case III: Vertex $0$ is coloured $d$.}
	Recall by Lemma \ref{lem:edgegadgetIOT1} that vertex $3$ has a neighbour coloured $a$ --  vertex $7$.
	Since vertex $0$ is coloured $d$, vertex $3$ is coloured with a vertex that is an out-neighbour of $a$ and an in-neighbour of $d$ in $T_4$.
	The only such vertices are $b$ and $c$.
	However, vertex $0$ has a neighbour coloured $b$ (vertex $0$ in a copy of $F_x$).
	Therefore vertex $3$ has colour $c$.
	Vertex $4$ must have a colour that is an out-neighbour of $c$ in $T_4$.
	The only such colours are $c$ and $d$.
	Since vertex $0$, an out-neighbour of vertex $3$, is coloured $d$, we have that vertex $4$ has colour $c$.
	Vertex $2$ must have a colour that is an in-neighbour of $c$ in $T_4$.
	The only such colours $a,b$ and $c$.
	Vertex $7$, an in-neighbour of vertex $3$, has colour $a$.
	Vertex $4$, a neighbour of vertex $3$, has colour $c$.
	Therefore by injectivity vertex $2$ has colour $b$.
	Vertex $5$ must be coloured with a common out-neighbour of $b$ and $c$ in $T_4$.
	The only such colours are $c$ and $d$.
	Since vertex $3$, an out-neighbour of vertex $2$, has colour $c$, we have by injectivity that vertex $5$ has colour $d$.
	The in-neighbours of vertex $5$ must be coloured with the in-neighbours of $d$ in $T_4$.
	Vertex $d$ has three in-neighbours in $T_4$ -- $b,c$ and $d$.
	Since vertex $2$ has colour $b$ and vertex $4$ has colour $c$, we have by injectivity that vertex $6$ has colour $d$.
\end{proof}

\begin{theo} 
	The problem of  iot-injective $T_4$-colouring is NP-complete.
\end{theo}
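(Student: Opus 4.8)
The plan is to reprise the reduction behind Theorem~\ref{thm:main4}, again transforming from $3$-edge-colouring of subcubic graphs \cite{Holyer}, but now built around the gadgets $F_x$ and $F_e$ and the three preceding lemmas. Given a graph $G$ with maximum degree at most $3$, I would fix an arbitrary orientation $\tilde{G}$ of $G$ and form an oriented graph $F$ by attaching a copy of $F_x$ for each vertex $x \in V(G)$ and a copy of $F_e$ for each arc $e \in E(\tilde{G})$. For every arc $e = uv$ I would identify vertex $0$ of $F_e$ with one of the square vertices of $F_u$ and vertex $6$ of $F_e$ with one of the square vertices of $F_v$, choosing the identifications so that each square vertex of any $F_x$ is used at most once; as in the ios construction this is possible precisely because $\Delta(G) \le 3$, and $F$ is built in polynomial time.

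For the forward direction I would, given an iot-injective $T_4$-colouring of $F$, colour each edge $e$ of $G$ by the common colour of vertices $0$ and $6$ in the corresponding copy of $F_e$. Lemma~\ref{lem:edgegadgetIOT2} makes this well defined and guarantees the colour lies in $\{b,c,d\}$, so at most three colours are used. It then remains to check that edges sharing an endpoint receive distinct colours: this should follow from the internal structure of $F_x$ together with Lemma~\ref{lem:vxgadgetIOT}, which fixes a reference colour inside each vertex gadget and, via the iot-injectivity requirement at a common neighbour of the square vertices, forces those (at most three) square vertices to be pairwise distinctly coloured. The induced edge-colouring is therefore proper and uses only $b$, $c$, $d$.

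For the backward direction I would start from a proper $3$-edge-colouring $f : E(G) \to \{b,c,d\}$ and colour each copy of $F_e$ using one of three precomputed iot-injective $T_4$-colourings — one for each admissible colour of vertices $0$ and $6$, exhibited explicitly just as Figure~\ref{edgegadgios} does for $H_e$ — chosen so that vertices $0$ and $6$ both receive $f(e)$. Since $f$ is proper, the square vertices of each $F_x$ inherit three distinct colours from $\{b,c,d\}$; by the symmetry of $F_x$ I may assume these are $b$, $c$, $d$, and a single fixed iot-injective colouring of $F_x$ (analogous to Figure~\ref{fig:vxcolour}) then extends the assignment over the whole vertex gadget.

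The hard part will be exactly the verification step in each direction: confirming in the forward direction that distinct edges at a vertex are forced apart, and in the backward direction that the three $F_e$-colourings and the $F_x$-colouring agree on identified vertices and jointly satisfy iot-injectivity. Both reduce to tracking how colours propagate through $F_x$ and $F_e$ — the analysis already carried out in Lemmas~\ref{lem:vxgadgetIOT}--\ref{lem:edgegadgetIOT2} — so what remains is careful bookkeeping against the displayed gadget colourings rather than any new idea. As the construction is polynomial and iot-injective $T_4$-colouring is plainly in NP, NP-completeness follows.
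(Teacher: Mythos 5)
Your proposal follows the paper's proof essentially step for step: the same reduction from $3$-edge-colouring of subcubic graphs via $F_x$ and $F_e$ with the identical identification scheme, the forward direction resting on Lemma~\ref{lem:edgegadgetIOT2} for well-definedness and on Lemma~\ref{lem:vxgadgetIOT} plus iot-injectivity at the square vertices' common neighbour for properness, and the backward direction using precomputed gadget colourings exactly as the paper does with Figures~\ref{fig:edgegadgT4iot} and~\ref{vertiott4}. The remaining bookkeeping you defer is precisely what those displayed colourings supply, so the argument is correct and matches the paper's.
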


\begin{proof}
	The transformation is from 3-edge-colouring subcubic graphs \cite{Holyer}.	
	
	Let $G$ be a graph with maximum degree at most $3$ and let $\tilde{G}$ be an arbitrary orientation of $G$.
	We create an oriented graph $F$ from $\tilde{G}$ as follows.
	For every $v \in V(G)$ we add $F_v$, a copy of the oriented graph $F_x$ given in Figure \ref{IOTT4VG}, to $F$.
	For every arc $uv \in E(\tilde{G})$ we add $F_{uv}$, a copy of the oriented graph $F_e$ given in  Figure \ref{IOTT4VG}, to $F$.
	To complete the construction of $F$, for each arc $uv \in E(\tilde{G})$ we identify vertex $0$ in $F_{uv}$ with one of the three square vertices (i.e., vertices $1$, $2$, or $3$) in $F_u$ and identify vertex $6$ in $F_{uv}$ with one of the three square vertices in $F_v$.
	We identify these vertices in such a way that each square vertex in a copy of $F_x$ is identified with at most one square vertex from a copy of $F_e$.
	We note that this is always possible as vertices in $G$ have degree at most three.
	
	We claim that $G$ has a 3-edge-colouring if and only if $F$ has an iot-injective $T_4$-colouring.	
	
	Suppose an iot-injective $T_4$-colouring of $F$ is given.
	This iot-injective $T_4$-colouring induces a 3-edge-colouring of $G$: the colour of an edge in $uv \in E(G)$ is given by colour of vertices $0$ and $6$ in corresponding copy of $F_{uv}$ contained in $F$. 
	By Lemma \ref{lem:edgegadgetIOT2} this colour is well-defined, and is one of $b$, $c$, or $d$.
	Recall for each copy of $F_x$, the vertices 1,2 and 3 are respectively each identified with either vertex 0 or vertex 6 in some copy of $F_{e}$.
	By Lemma \ref{lem:vxgadgetIOT}, vertices $1$, $2$ and $3$ in a copy of $F_x$ cannot be coloured  $a$.
	By injectivity, vertices $1$, $2$ and $3$ in a copy of $F_x$ all are assigned different colours.
	Therefore each of the edges incident with any vertex receives different colours and no more than $3$ colours, namely $b$, $c$, and $d$, are used on the edges of $G$.
	Therefore $G$ has a 3-edge-colouring.
	

	Suppose a 3-edge-colouring of $G$, $f: E(G) \to \{b,c,d\}$ is given.
	For each $uv \in E(G)$ we colour $F_{uv}$ using one of the iot-injective $T_4$-colourings given in Figure \ref{fig:edgegadgT4iot}. 
	We choose the colouring of $F_{uv}$ so that vertices $0$ and $6$ are assigned the colour $f(uv)$. 
	To complete the proof, we show that such colouring can be extended to all copies of $F_x$ contained in $F$.
	
	Recall for each copy of $F_x$, the vertices 1, 2 and 3 are respectively each identified with either vertex 0 or vertex 6 in some copy of $F_{e}$.
	Since $f$ is a 3-edge-colouring of $G$, for each $x \in V(G)$, each of the vertices $1$, $2$ and $3$ in $F_x$ are coloured with distinct colours from the set $\{b,c,d\}$ when we colour each copy of $F_{e}$ using Figure~\ref{fig:edgegadgT4iot}.
	
	By symmetry of $F_x$, we may assume without loss of generality that vertices $1,2$ and $3$ are respectively coloured $b$, $c$ and $d$ in each copy of $F_x$.
	The iot-injective $T_4$-colouring given in Figure \ref{vertiott4} extends a pre-colouring of the vertices 1, 2 and 3 with colours $b$, $c$, and $d$, respectively, to an iot-injective $T_4$-colouring of $F_x$. 
	Therefore $F$ has an iot-injective $T_4$-colouring.
	
	\begin{figure}
\begin{center}
			\includegraphics{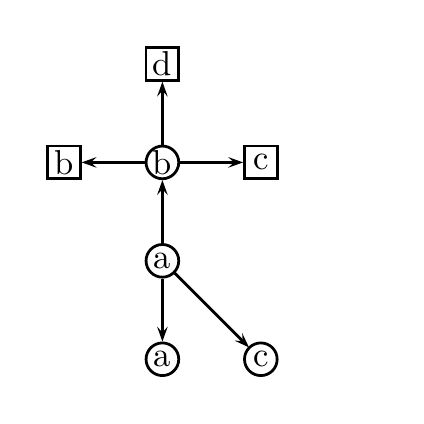}
		\caption{Colouring  $F_x$.}
		\label{vertiott4}
\end{center}
	\end{figure}
	
	\begin{figure}
	\begin{center}
			\includegraphics[scale=0.8]{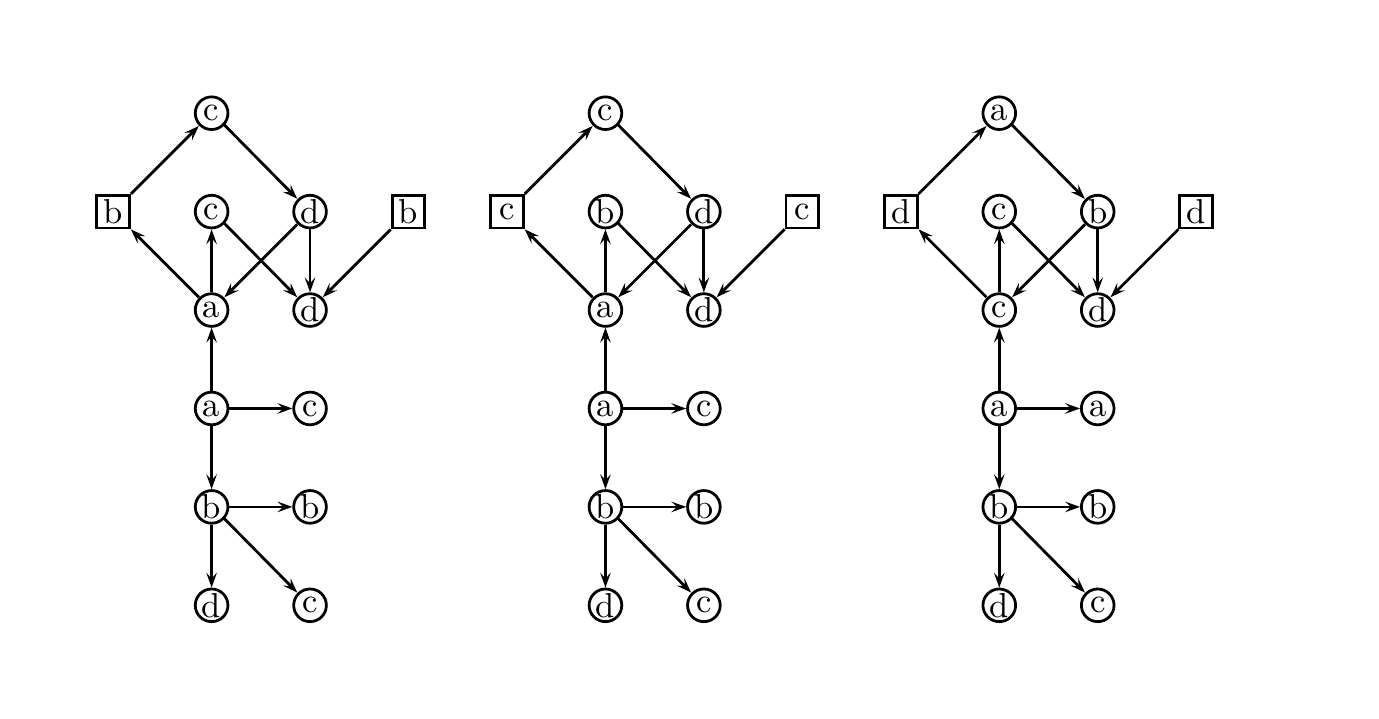}
		\caption{\label{fig:edgegadgT4iot}Colouring the $F_e$ for associated edge colours of $b$, $c$, and $d$.}
	\end{center}
	\end{figure}
	
	Since the construction of $F$ can be carried out in polynomial time, iot-injective $T_4$-colouring is NP-complete.
\end{proof}

We provide a similar argument for iot-injective $T_5$-colouring. The transformation is from iot-injective $C_3$-colouring \cite{Rus1}. We construct an oriented graph $D$ from a  graph $G$ so that $G$ admits an iot-injective homomorphism to $C_3$ if and only if $D$ admits an iot-injective homomorphism to $T_5$. The key ingredient in the construction is the oriented graph, $D_v$, given in Figure \ref{Fv}.

\begin{figure}[!ht]
\begin{center}
		\includegraphics{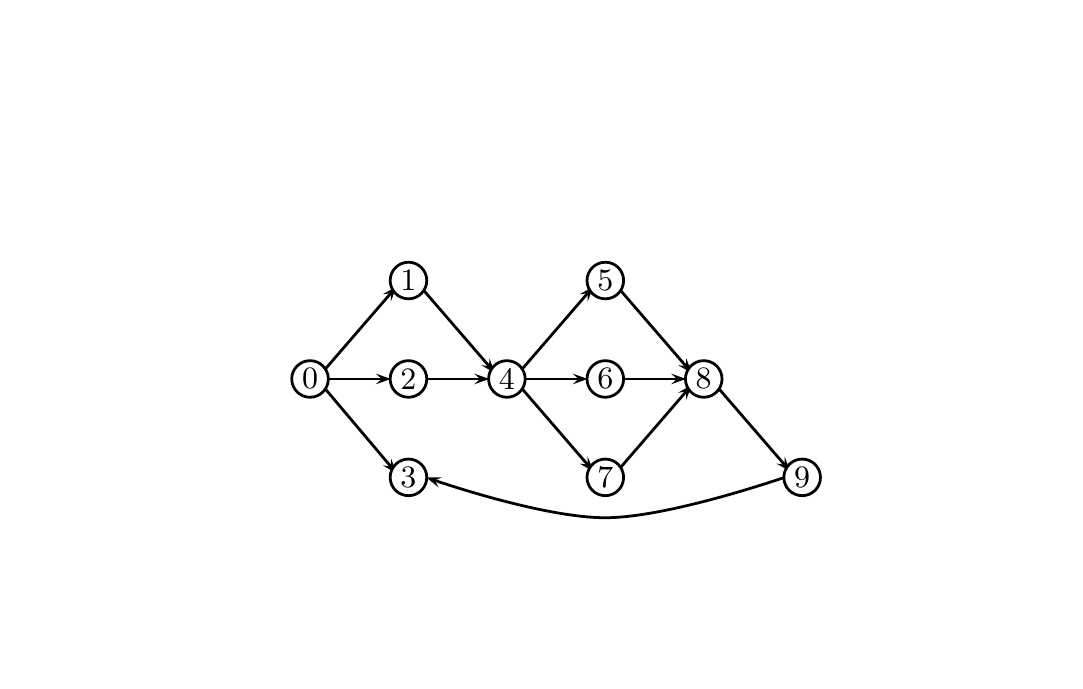}
	\caption{$D_v$.}
	\label{Fv}
\end{center}
\end{figure}

For each $n>0$ let $D_n$  be the oriented graph constructed from $n$ disjoint copies of $D_v$, say $D_{v_0}, D_{v_1},$ $ \dots,D_{v_{n-1}}$, 
by letting vertex $8$ of $D_{v_i}$ be an in-neighbour of vertex $0$ in $D_{v_{i+1}\pmod n}$ for all $0 \leq i \leq n-1$.

\begin{lemma} \label{lem:loopLemmaiot}
	For any positive integer $n$, up to automorphism, in an oriented iot-injective $T_5$-colouring of $D_n$ each of the vertices labeled $0$ receive the colour $d$, each of the vertices labeled $4$ receive the colour $a$, and each of the vertices labeled $8$ receive the colour $c$.
\end{lemma}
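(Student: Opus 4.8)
The plan is to run the same rotational bookkeeping used for the ios gadget in Lemma~\ref{lem:loopLemma}, while tracking the extra constraint that iot-injectivity imposes through the \emph{combined} neighbourhood. Identify the colours $a,b,c,d,e$ of $T_5$ with $0,1,2,3,4$ modulo $5$ so that, reading off Figure~\ref{T5}, the out-neighbourhood of a colour $x$ is $N^+(x)=\{x,x+1,x+2\}$ and its in-neighbourhood is $N^-(x)=\{x,x-1,x-2\}$ (all indices $\pmod 5$). Two features of this Paley structure drive the argument: the three colours of $N^+(x)$ have the single common out-neighbour $x+2$, and $N^+(x)\cup N^-(x)$ is all of $T_5$, so under iot-injectivity every neighbour of a source vertex must receive a distinct colour.

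Since $T_5$ is vertex-transitive, I would first normalise, up to automorphism, the colour of vertex $0$ in $D_{v_0}$ to be $d$; this is exactly the freedom recorded by the phrase ``up to automorphism''. Reading Figure~\ref{Fv}, vertices $0$, $4$, $8$ lie on a short directed chain in which vertex $4$ is the common out-neighbour of the three out-neighbours of vertex $0$, and vertex $8$ is the common out-neighbour of the three out-neighbours of vertex $4$. By iot-injectivity the three out-neighbours of a hub receive three distinct colours from the hub's out-neighbourhood, hence \emph{all} of it, so their common out-neighbour is pinned to hub-colour $+2$. Applying this twice gives vertex $4=d+2=a$ and vertex $8=d+4=c$, which is the assertion inside a single copy.

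To close the argument I would propagate around the cycle of copies. The construction makes vertex $8$ of $D_{v_i}$, now coloured $c$, an in-neighbour of vertex $0$ of $D_{v_{i+1}}$. The arc forces the colour $y$ of the next vertex $0$ into $N^+(c)=\{c,d,e\}$; moreover, since the out-neighbours of that vertex already exhaust $N^+(y)$, iot-injectivity forbids the incoming colour $c$ from lying in $N^+(y)$, which leaves $y\in\{d,e\}$. To eliminate $y=e$ I would invoke the remaining arc of $D_v$ incident with vertex $0$: the chain also fixes the colour of an internal in-neighbour of vertex $0$ (at $y-2$), and in the scenario $y=e$ this internal colour equals the incoming colour $c$, violating injectivity, whereas for $y=d$ the two are distinct. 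Hence the colour of vertex $0$ is reproduced exactly from copy to copy; running this once around all $n$ copies — the cyclic closure being automatic, as the colour merely repeats — yields that every vertex labelled $0$ is coloured $d$, every vertex labelled $4$ is coloured $a$, and every vertex labelled $8$ is coloured $c$.

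The main obstacle is precisely this last synchronisation. In the ios gadget the three parallel inter-copy arcs of Lemma~\ref{lem:loopLemma} already identify the next hub as a unique common out-neighbour, which does all the work; here a single inter-copy arc only narrows the next colour to two candidates, and ruling out the parasitic ``$+1$'' step — which, left unchecked, would let the colour of vertex $0$ rotate through all of $T_5$ every five copies and so defeat the lemma for $n$ divisible by $5$ — must be extracted from the internal wiring of $D_v$ at vertex $0$. Pinning down exactly which internal neighbour of vertex $0$ carries the clashing colour, and verifying the short injectivity contradiction, is where the real content lies; the rest is a direct transcription of the rotation bookkeeping already established for $J_n$.
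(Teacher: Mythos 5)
Your skeleton matches the paper's (normalise by vertex-transitivity so that vertex $0$ of $D_{v_0}$ is coloured $d$; pin vertices $4$ and $8$; propagate around the cycle of copies), but both of your crucial forcing steps rest on a gadget you guessed incorrectly: you transplanted the wiring of $J_v$ from Lemma~\ref{lem:loopLemma} into $D_v$. In $D_v$ (Figure~\ref{Fv}) vertex $4$ is \emph{not} the common out-neighbour of the three out-neighbours of vertex $0$: only vertices $1$ and $2$ are in-neighbours of vertex $4$, while the third out-neighbour of vertex $0$, vertex $3$, is attached instead to vertex $9$. With only two in-neighbours drawn from $N^+(d)=\{d,e,a\}$, your hub observation does not pin vertex $4$ to $d+2=a$: the colour $b$ also satisfies every local constraint, since $N^-(b)=\{b,a,e\}$ meets $N^+(d)$ in the two colours $a,e$. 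The paper has to exclude $b$ by a genuinely global argument: if vertex $4$ were $b$, then vertices $1,2$ get $a,e$, vertex $3$ gets $d$, vertices $5,6,7$ get $b,c,d$, so their common out-neighbour, vertex $8$, gets $d$ -- and then vertex $9$, which is adjacent to both vertex $3$ and vertex $8$, has two neighbours coloured $d$, violating iot-injectivity. Your proposal contains no trace of this case analysis, so the central conclusion ``vertex $4$ is coloured $a$'' is unestablished.

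The second gap is the tie-break you yourself flagged as the crux. Your elimination of the parasitic value $e$ for the next copy's vertex $0$ invokes ``an internal in-neighbour of vertex $0$ at colour $y-2$,'' but vertex $0$ of a copy has no internal in-neighbour in $D_v$; by the construction of $D_n$ its only in-neighbour is vertex $8$ of the previous copy. The paper resolves the $\{d,e\}$ ambiguity at the \emph{other} out-neighbour of vertex $8$: vertex $9$ and the next vertex $0$ are both out-neighbours of vertex $8$ and hence differently coloured, both are confined to $\{d,e\}$ (colour $c$ is already spent on an in-neighbour of vertex $8$ among $5,6,7$), and vertex $9$ cannot be $d$ because it is adjacent to vertex $3$, whose neighbour vertex $0$ is already coloured $d$; hence vertex $9$ is $e$ and the next vertex $0$ is $d$. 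You correctly identified that a single inter-copy arc only narrows the next colour to two candidates and that an unchecked ``$+1$'' step would rotate the colouring and defeat the lemma, but the internal constraint you posit to kill it does not exist in the gadget, so the proof fails exactly at the step you called ``where the real content lies.''
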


\begin{proof}
	Since $T_5$ is vertex-transitive, assume without loss of generality that vertex $0$ in 	$D_{v_0}$ receives colour $d$ in some iot-injective $T_5$-colouring of $D_n$.
	Observe that vertex $4$ has three out-neighbours. 
	Since each vertex of $T_5$ has at most three out-neighbours (including itself), vertex $4$ must have the same colour as one of its out-neighbours.
	To satisfy the injectivity constraint,  no in-neighbour of vertex $4$ has the same colour as vertex $4$.
	Further, vertex $4$ has two in-neighbours, vertices $1$ and $2$, that are out-neighbours of a vertex coloured $d$.
	Only vertices $a$ and $b$ in $T_5$  have two in-neighbours that are out-neighbours of $d$.
	Therefore vertex $4$ has colour $a$ or $b$.
	
	If vertex $4$ has colour $b$, then vertices $1$ and $2$ are coloured with the vertices of $T_5$ that are out-neighbours of $d$ and in-neighbours of $b$. 
	The only such vertices in $T_5$ that satisfy these criteria are $a$ and $e$. 
	Therefore vertices $1$ and $2$ are coloured with $a$ and $e$, in some order.
	Vertex $d$ has three out-neighbours in $T_5$ -- $a,d$ and $e$.
	Since vertices $1$ and $2$ are coloured with $a$ and $e$, in some order, the third out-neighbour of vertex $0$, vertex $3$, is coloured with $d$.
	Vertex $b$ in $T_5$ has three out-neighbours -- $b,c$ and $d$.
	Therefore the out-neighbours of vertex $4$, vertices $5,6$ and $7$, are coloured, in some order, with these colours.
	Vertices $b,c$ and $d$ in $T_5$ have only $d$ as a common out-neighbour.
	Therefore the common out-neighbour of vertices $5,6$ and $7$, vertex $8$, is coloured $d$.
	This is a contradiction, as now vertex $9$ has two vertices coloured $d$ in its neighbourhood.
	Therefore vertex $4$ has colour $a$.

	Vertex $a$ in $T_5$ has three out-neighbours -- $a,b$ and $c$. 
	Thus the out-neighbours of vertex $4$ are coloured with $a$, $b$ and $c$, in some order.
	The only common out-neighbour of 	$a,b$ and $c$ in $T_5$ is $c$.
	Therefore vertex $8$ has colour $c$.
	This implies that vertex $9$ in $D_{v_0}$ and $0$ in $D_{v_1}$ have colours from the set  $\{c,d,e\}$, the out-neighbours of $c$ in $T_5$.
	Since vertex 8  has a neighbour coloured $c$, neither vertex $0$ in $D_{v_1}$ nor $9$ (in $D_{v_0}$)  can have this colour.
	Further, since vertex 3 has a neighbour coloured $d$,  vertex 9 has cannot be coloured $d$.
	Thus vertex $9$ in $D_{v_0}$  has colour $e$ and vertex $0$ in $D_{v_1}$ has colour $d$.
	
	Repeating this argument implies that every vertex labeled $0$ has colour $d$.
\end{proof}

\begin{theo}
	The problem of iot-injective $T_5$-colouring is NP-complete.
\end{theo}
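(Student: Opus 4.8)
The plan is to follow the template of Theorem~\ref{thm:main5} almost verbatim, now reducing from iot-injective $C_3$-colouring, which is NP-complete by \cite{Rus1}. Given a graph $G$ on vertex set $\{v_0,\dots,v_{\nu_G-1}\}$ with $\nu_G=|V(G)|$, I would construct an oriented graph $D$ by taking a single copy of the cyclic gadget $D_{\nu_G}$ and attaching each block $D_{v_i}$ to the corresponding vertex $v_i$ of $G$ through an arc between $v_i$ and one designated vertex of $D_{v_i}$ (the role played by vertex~$11$ in the ios construction). Membership in NP is immediate, so the entire content is the equivalence that $D$ admits an iot-injective $T_5$-colouring if and only if $G$ admits an iot-injective $C_3$-colouring, after which polynomiality of the construction finishes the reduction.

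For the forward direction I would invoke Lemma~\ref{lem:loopLemmaiot}: up to a single global automorphism of $T_5$, the cyclic gadget rigidly fixes the skeleton, with every vertex labelled $0$ coloured $d$, every vertex labelled $4$ coloured $a$, and every vertex labelled $8$ coloured $c$. Propagating these forced colours through each block $D_{v_i}$ determines the colour of the designated attachment vertex and thereby constrains $v_i$: the forced neighbour-colours rule out exactly two of the five colours of $T_5$, leaving $v_i$ a choice among three vertices. The key structural fact is that $T_5$ is the rotational tournament with $N^+(x)=\{x,x+1,x+2\}$, so any three appropriately spaced vertices span a reflexive directed $3$-cycle; the three residual colours therefore induce a copy of $C_3$. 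Restricting an iot-injective $T_5$-colouring of $D$ to $V(G)$ then yields an iot-injective $C_3$-colouring of $G$, exactly as in the ios argument where $v_i$ was forced into $\{b,d,e\}$.

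For the converse, given an iot-injective $C_3$-colouring $\beta$ of $G$ on the three residual colours, I would colour each block $D_{v_i}$ by one of three fixed iot-injective colourings chosen according to $\beta(v_i)$ (in the manner of Figure~\ref{fig:t5gadg}), arranging that the attachment vertex of $D_{v_i}$ always carries a skeleton colour lying outside the chosen $C_3$. I expect the main obstacle to be precisely the local check at the attachment points, and it is sharper here than in the ios case: because iot-injectivity demands injectivity on the single combined set $N^+(v_i)\cup N^-(v_i)$, I must verify simultaneously that no two $G$-neighbours of $v_i$ collide --- this is exactly the $C_3$-injectivity of $\beta$ --- and that the attachment colour differs from all of them, which reduces to the attachment colour not lying in the residual $C_3$. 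Once this compatibility is confirmed in each of the three colour cases, and the injectivity constraint is checked inside every block and along the cyclic links supplied by Lemma~\ref{lem:loopLemmaiot}, the equivalence --- and hence NP-completeness --- follows.
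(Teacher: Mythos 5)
Your proposal is correct and follows essentially the same route as the paper: a reduction from iot-injective $C_3$-colouring in which each block $D_{v_i}$ of the cyclic gadget $D_{\nu_G}$ is attached to $v_i$ by an arc from a designated vertex (vertex $5$ in the paper's construction), with Lemma~\ref{lem:loopLemmaiot} rigidly forcing the skeleton colours so that $v_i$ is restricted to $\{b,d,e\}$, which induces a copy of $C_3$ in $T_5$. The converse direction also matches the paper's argument, extending $\beta$ via three template colourings of the blocks (Figure~\ref{fig:iotT5}) and checking compatibility at the attachment vertex, whose colour $a$ or $c$ lies outside the residual $C_3$.
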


\begin{proof}
	
	The transformation is from iot-injective $C_3$-colouring (See Theorem~\ref{rus1}).

	Let $G$ be an oriented graph with vertex set $\{v_0,v_1, \dots, v_{|V(G)|-1}\}$.
	Let $\nu_G = |V(G)|$.
	We construct $D$ from $G$ by first adding a copy of $D_{\nu_G}$ to $G$  and then, for each $1 \leq i \leq \nu_G$, adding an arc from vertex $5$ in $D_{v_i}$ to $v_i$.	
	
	We show that $D$ has an  iot-injective $T_5$-colouring if and only if $G$ has an iot-injective $C_3$-colouring.
	
	Consider $\phi$, an iot-injective $T_5$-colouring of $D$. 
	Since $T_5$ is vertex-transitive we may assume that vertex $0$ in $D_{v_0}$ has colour $d$.
	By Lemma \ref{lem:loopLemmaiot}, for all $0 \leq i \leq \nu_G-1$, the vertex in $D_{v_i}$ labeled $0$ has colour $d$, the vertex labeled $4$ has colour $a$ and the vertex labeled $8$ has colour $c$.
	By the injectivity requirement, the neighbours of the vertex labeled $5$ in each copy of   $D_{v}$ have distinct colours.
	Since the vertices $4$ and $8$ have colours $a$ and $c$, respectively, only colours $b,d$ or $e$ can appear at $v_i$, for all $0 \leq i \leq \nu_G-1$.
	Since $b,d$, and $e$ induce a copy of $C_3$ in $T_5$, we conclude that the restriction of $\phi$ to the vertices of $G$ is indeed an  iot-injective $C_3$-colouring.
	
	Let $\beta$ be an iot-injective $C_3$-colouring of $G$ using colours $b,d$ and $e$. We extend such a colouring to be an ios-injective $T_5$-colouring of $D$ by assigning the vertices of each $D_{v_i}$ colours based upon $\beta(v_i)$ as shown in Figure~\ref{fig:iotT5}. This colouring satisfies the injectivity requirement, as each vertex $v_i$ has only neighbours coloured $b,d$ and $e$ in $G$, and its additional neighbour in $D_{v_i}$, vertex $5$, has colour $a$ or $c$.
	
	\begin{figure}[!ht]
	\begin{center}
			\[\includegraphics[width = .4\linewidth]{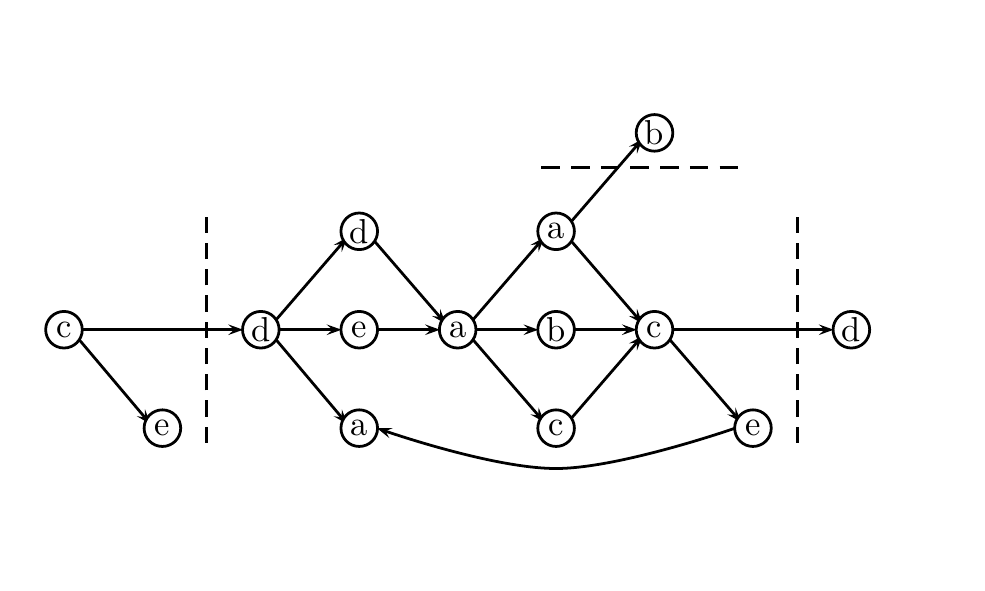}
		\includegraphics[width = .4\linewidth]{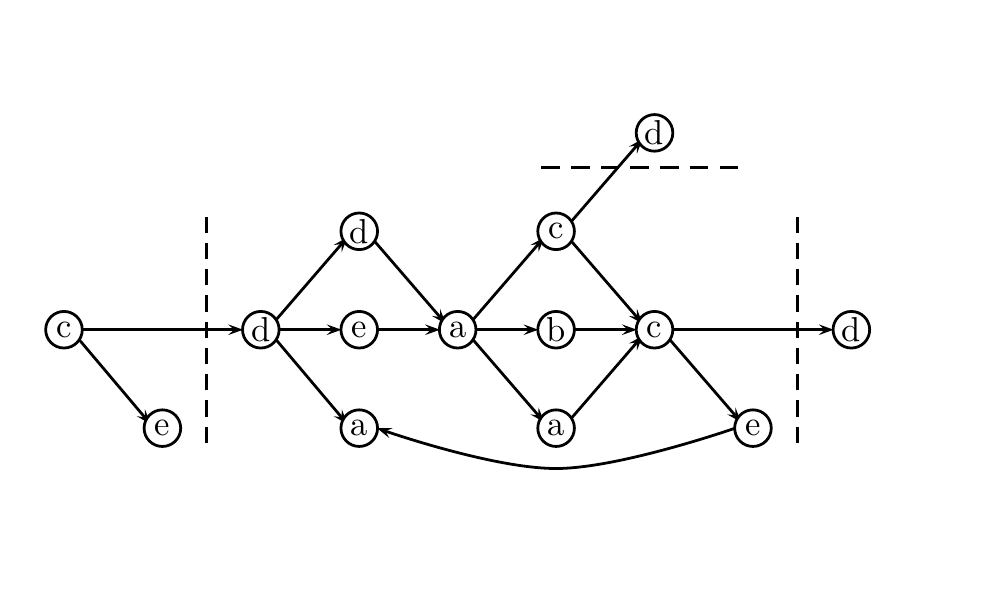}\]
		\[	\includegraphics[width = .4\linewidth]{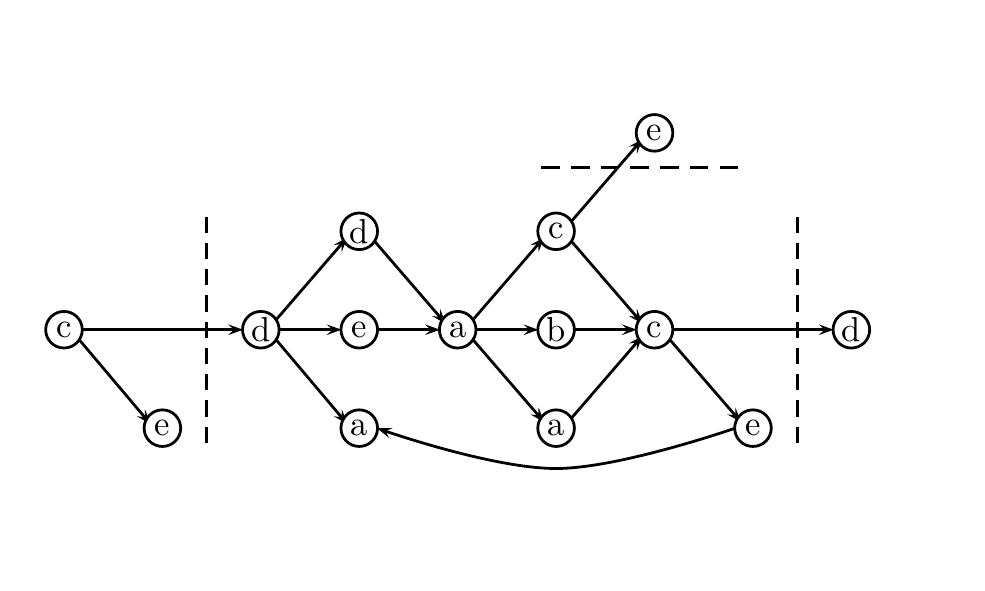}\]
		\caption{\label{fig:iotT5} Colouring the vertices of $D_{v_i}$ using the colour of $v_i$.}
	\end{center}
	\end{figure}

	Therefore $D$ has an iot-injective $T_5$-colouring if and only if $G$ has an iot-injective $C_3$-colouring. 
	As $D$ can be constructed in polynomial time,  iot-injective $T_5$-colouring is NP-complete.
\end{proof}

We now present a reduction to instances of iot-injective $T$-colouring  for when $T$ has a vertex $v$ of out-degree at least four. This reduction allows us to polynomially transform an instance of iot-injective $T$-colouring to an instance of iot-injective $T^\prime$-colouring, where $T^\prime$ is $T_4$, $T_5$, $C_3$ or $T\!T_3$.

\begin{lemma}
	\label{indiot}
	If $T$ is a reflexive tournament on $n$ vertices with a vertex $v$ of out-degree at least four, then iot-injective homomorphism to $T'$ polynomially transforms to iot-injective homomorphism to $T$, where $T'$ is the tournament induced by the strict out-neighbourhood of $v$. 
\end{lemma}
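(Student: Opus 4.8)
The plan is to mirror the reduction of Lemma~\ref{indios} as closely as possible, adapting it to the combined-neighbourhood injectivity of iot-colourings. Given an instance $G$ of iot-injective $T'$-colouring with vertex set $\{w_0,\dots,w_{\nu_G-1}\}$, I would build an oriented graph $H$ by taking $\nu_G$ irreflexive copies $T_0,\dots,T_{\nu_G-1}$ of $T$, introducing connector vertices, and adding an arc from a connector into each $w_i$, so that pinning the connectors to the colour $v$ forces each image of $w_i$ into the strict out-neighbourhood of $v$, i.e.\ into $V(T')$. As in the ios case, $T'$ is a reflexive tournament on at least three vertices and a proper subtournament of $T$, so this is a genuine reduction; the equivalence ``$H$ has an iot-injective $T$-colouring if and only if $G$ has an iot-injective $T'$-colouring'' is what must be proved.

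The first step is the analogue of Claim~1: in any iot-injective $T$-colouring $\phi$ of $H$, each copy $T_i$ is rainbow-coloured. This is in fact \emph{easier} than in the ios setting. If two vertices $x,y$ of $T_i$ shared a colour, then since $T_i$ is a tournament on $n\ge 4$ vertices any third vertex $z$ of $T_i$ is adjacent to both, so $x$ and $y$ would be two neighbours of $z$ receiving the same colour, contradicting iot-injectivity at $z$. Hence each copy uses all $n$ colours exactly once, and $\phi$ restricted to $T_i$ is an automorphism $\sigma_i$ of $T$. A crucial consequence, which drives the rest of the argument, is a \emph{saturation principle}: a rainbow copy-vertex $u$ already sees all $n-1$ colours other than its own among its neighbours in $T_i$, so it can have at most one neighbour outside $T_i$, and any such neighbour is forced to receive the colour $\phi(u)$, the loop at $\phi(u)$ making the arc direction consistent.

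The second and decisive step is the analogue of Claim~2: forcing every connector that feeds a $w_i$ to receive the same colour, which after a global automorphism we may take to be $v$. Once this is in hand both directions are routine. For the forward direction one colours each copy by the identity, the connectors by $v$, and each $w_i$ by its $T'$-colour, then checks iot-injectivity vertex by vertex. For the backward direction the saturation principle places every image of $w_i$ in $V(T')$, so the restriction of $\phi$ to $G$ is the desired iot-injective $T'$-colouring.

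I expect this colour-agreement step to be the main obstacle, and it is where the iot construction must genuinely depart from the ios one. The ios reduction propagates a common colour around a directed cycle by letting each connector $x_i$ be simultaneously an out-neighbour of $v_i$ and an in-neighbour of $v_{i+1}$. Under iot-injectivity this is fatal: the saturation principle forbids any rainbow copy-vertex from having two neighbours outside its copy, yet that cyclic chain makes each $v_{i+1}$ adjacent to both $x_i$ and $x_{i+1}$; indeed, with full copies one checks that the ios gadget admits \emph{no} iot-injective $T$-colouring at all. Respecting the one-external-neighbour budget forces incoming and outgoing links to be attached at distinct ports of each copy, but then a single link only imposes a relation of the form $\sigma_i(\cdot)=\sigma_{i+1}(\cdot)$ at those ports, which does not by itself pin the colour of $v$. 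Engineering the linkage so that these local relations nonetheless collapse all the output connectors to a single colour --- in the spirit of the rigid chained gadgets $J_v$ and $D_v$ used for $T_5$, whose internal structure propagates a forced colour from one copy to the next through a single arc --- is the technical heart of the proof. Once the agreement claim is established, the forward and backward directions follow the template of Lemma~\ref{indios} essentially verbatim.
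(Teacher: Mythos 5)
Your proposal correctly sets up the frame (rainbow copies, the saturation principle, the reduction template of Lemma~\ref{indios}) and correctly diagnoses why the ios linkage fails under iot-injectivity, but it stops exactly at the step that constitutes the proof: you never exhibit a gadget achieving the colour-agreement claim, and you say yourself that engineering it is ``the technical heart of the proof.'' As written, the argument is a reduction schema with its central lemma left as an acknowledged open problem, so there is a genuine gap --- nothing in your text shows that \emph{any} linkage respecting the one-external-neighbour budget can collapse all connectors to the colour $v$, and your own observation that a single inter-copy arc only yields a local relation $\sigma_i(\cdot)=\sigma_{i+1}(\cdot)$ makes clear that the missing construction is not routine.

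For comparison, the paper resolves this by abandoning single connector vertices altogether. Between consecutive full copies it interleaves a \emph{punctured} copy $T_i^\star$ of $T$, obtained by deleting every arc with tail at $v$; it adds the arc $v_i \to v_i^\star$, and for each $u \neq v$ a matching arc from $u_i^\star$ to $u_{i+1}$, with $w_i$ an out-neighbour of $v_i^\star$. The puncturing is what makes this consistent with your saturation principle: deleting the out-arcs of $v$ frees enough degree at $v_i^\star$ to absorb both the arc from $v_i$ and the arc to $w_i$. Agreement then follows in two counting steps. First, $v_i$ has $n$ neighbours, so it must share a colour with one of them; by the rainbow claim that neighbour can only be $v_i^\star$. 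Second, each $u_{i+1}$ with $u \neq v$ already sees $n-1$ colours inside $T_{i+1}$, so the matching arc forces $u_i^\star$ to duplicate its colour; hence $T_i^\star \setminus \{v_i^\star\}$ carries exactly the colours of $T_{i+1} \setminus \{v_{i+1}\}$, and an in-neighbour $y_i^\star$ of $v_i^\star$ then sees every colour except $\phi(v_{i+1})$ on neighbours other than $v_i^\star$, forcing $\phi(v_i^\star) = \phi(v_{i+1})$ (with a separate easy argument when $v$ has full out-degree and no such $y$ exists). Your saturation principle is precisely the right lens --- it is what makes both counting steps work --- but without this $T^\star$-matching construction, or some substitute for it, your proof does not go through.
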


\begin{proof}
	Let $T$ be a reflexive tournament on $n$ vertices with a fixed vertex $v$ of out-degree four or more. 
	Let $T^\star$ be the graph obtained by removing from $T$ all the arcs with their tail at $v$.
	
	Let $G$ be an oriented graph with vertex set $\{w_0,w_1, \dots,w_{|V(G)|-1}\}$.
	Let $\nu_G = |V(G)|$.
	We construct $C$ from $G$ by adding to $G$
	\begin{itemize}
		\item $\nu_G$  disjoint irreflexive copies of $T: T_0,T_1, \dots,T_{\nu_G-1}$;
		\item $\nu_G$  disjoint irreflexive copies of $T^\star: T^\star_0,T^\star_1, \dots,T^\star_{\nu_G-1}$;
		\item and for all $u \in V(T)$ where $u \neq v$, an arc from the vertex corresponding to $u$ in $T^\star_{i-1}$ to the vertex corresponding to $u$ in $T_i$, for all $0 \leq i \leq {\nu_G-1}$
	\end{itemize}
	
	Let $v_i$ and $v_i^\star$ be the vertices corresponding to $v$ in $T_i$ and $T_i^\star$, respectively. 
	We complete the construction of $C$  by adding an arc from $v_i$ to $v_i^\star$ for all $0 \leq i \leq {\nu_G-1}$.
	See Figure \ref{iotlem}
	
	\begin{figure}[!ht]
\begin{center}
			\includegraphics{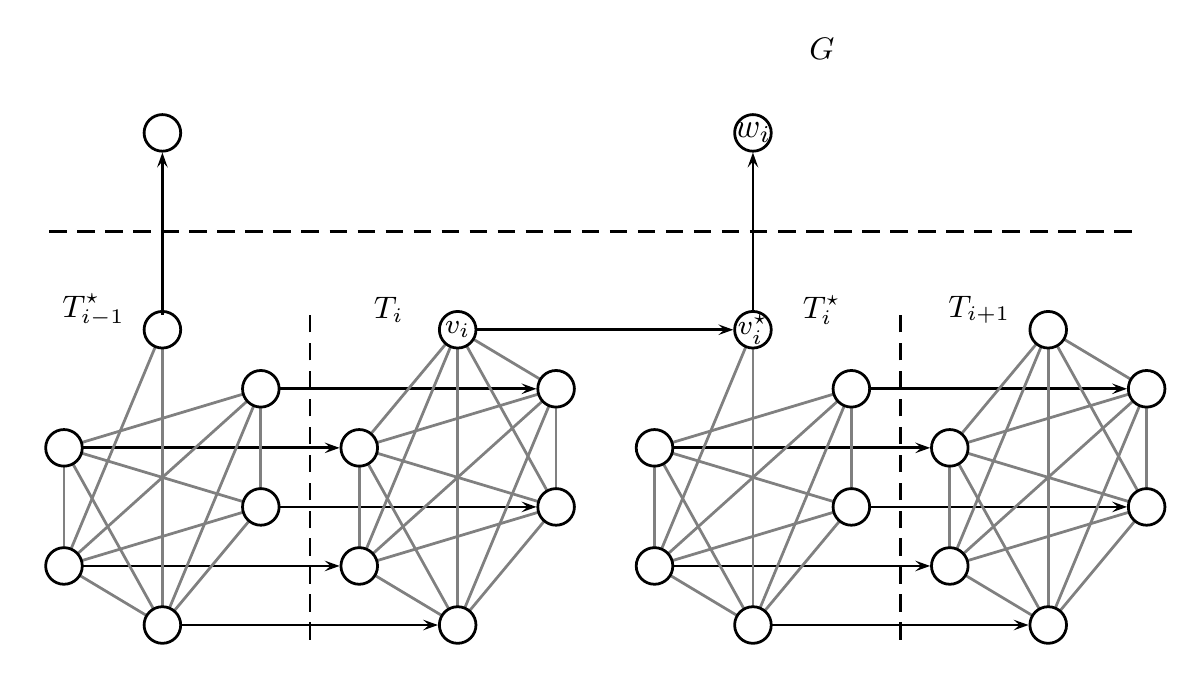}
		\caption{The construction of $C$ in Lemma \ref{indiot}.}
		\label{iotlem}
\end{center}
	\end{figure}
	
	\emph{Claim 1: In an iot-injective $T$-colouring of $C$, no two vertices of $T_i$ have the same colour}.
	If two vertices of $T_{i}$ are assigned the same colour, then a common neighbour of such vertices in $T_{i}$ has a pair of neighbours with the same colour. This is a violation of the injectivity requirement.
	Therefore no two vertices of $T_{i}$ are assigned the same colour.
	
	\emph{Claim 2: In an iot-injective $T$-colouring of $C$, $v_i$ and $v_i^\star$ have the same colour}.
	Since $v_i$ has $n$ neighbours, in any iot-injectve $T$-colouring of $C$, $v_i$ is assigned the same colour as one of its neighbours. By the previous claim, the neighbour of $v_i$ that has the same colour as $v_i$ must be $v_i^\star$.
	
	\emph{Claim 3: In an iot-injective $T$-colouring of $C$, $v_i$ and $v_{i+1}$ have the same colour}.
	We show that $v_{i+1}$ has the same colour as $v_i^\star$.
	If $v$ has out-degree $n$ in $T$, then by construction $v_{i}$ and $v_{i+1}$ each have out-degree $n$ in $C$.
	Since no two out-neighbours of any vertex can receive the same colour, and since there can be at most one vertex of out-degree $n$ in $T$, it must be that $v_i$ and $v_{i+1}$ have colour $v$.
	Suppose then that $v$ has at least one in-neighbour distinct from itself, say $y$, in $T$. 
	Let $y_i^\star$ be the vertex corresponding to $y$ in $T_i^\star$. 
	Let $u_{i+1}$ be a vertex in $T_{i+1} \setminus \{v_{i+1}\}$, and let $u_i^\star$ be the vertex of  $T_i^\star$ which has $u_{i+1}$ as an out-neighbour. 
	By the first claim, no two vertices in $T_{i+1}$ share a colour, and since $u_{i+1}$ has $n-1$ neighbours in $T_{i+1}$, it must be that $u_{i+1}$ and $u_i^\star$ share a colour. 
	This implies that no two vertices of $T_i^\star \setminus \{v_i^\star\}$ have the same colour, and the colours used to colour $T_i^\star \setminus \{v_i^\star\}$ are the same colours as those used to colour $T_{i+1} \setminus \{v_{i+1}\}$. 
	The vertex $y_i^\star$ has $v_i^\star$ as an out-neighbour, and each colour except the colour of $v_{i+1}$ is used to colour a vertex distinct from $v_i^\star$ which is a neighbour of $y_i^\star$. 
	Therefore, $v_i^\star$ must have the same colour as $v_{i+1}$. The result now follows from the previous claim.
	\vspace{\baselineskip}

	Let $T^\prime$ be the reflexive sub-tournament of $T$ induced by the strict out-neighbourhood of $v$.
	We show $G$ has an iot-injective $T'$-colouring if and only if $C$ has an iot-injective $T$-colouring.
	
	Consider an iot-injective $T$-colouring of $C$, $\phi$.
	By the claims above, $\phi(v_i^\star) = \phi(v_i) = \phi(v_j) = \phi(v_j^\star)$ for all $1 \leq i,j \leq \nu_G -1$.
	Since each vertex in $T_i$ is assigned a distinct colour from $T$ and $T \cong T_i$, if $\phi(v_i) \neq v$, then there is an automorphism of $T$ that maps $v$ to $\phi(v_1)$.
	As such we may assume, without loss of generality that $\phi(v_i) = v$ for all $1 \leq i \leq \nu_G -1$.
	Since $w_i$ is an out-neighbour of $v_i^\star$ for each  $1 \leq i, \leq \nu_G -1$ we have that $\phi(w_i)$ is contained in the out-neighbourhood of $v$ for all $1 \leq i \leq \nu_G -1$.
	That is, $\phi(w_i) \in V(T^\prime)$ for all $1 \leq i \leq \nu_G -1$.
	Therefore the restriction of $\phi$ to $G$ is an iot-injective homomorphism to $T^\prime$.
	
	Consider now an iot-injective $T^\prime$-colouring of $G$, $\beta$.
	We extend $\beta$ to be an iot-injective $T$-colouring of $C$ as follows.
	For each $z \in V(T)$ let $z_i$ and $z_i^\star$ be the corresponding vertices in $T_i$ and $T_i^\star$, respectively.
	We extend $\beta$ so that $\beta(z_i) = \beta(z_i^\star) = z$.
	It is easily verified that $\beta$ is an iot-injective $T$-colouring of $C$.
\end{proof}

The construction of $C$ can be modified to give the corresponding result for reflexive tournaments $T$ with a vertex of in-degree at least four.

\begin{lemma}
	\label{indiot2}
	If $T$ is a reflexive tournament on $n$ vertices with a vertex $v$ of in-degree at least four, then iot-injective homomorphism to $T'$ polynomially transforms to iot-injective homomorphism to $T$, where $T'$ is the tournament induced by the strict in-neighbourhood of $v$. 
\end{lemma}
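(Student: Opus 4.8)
The plan is to exploit an arc-reversal duality rather than to re-run the case analysis of Lemma~\ref{indiot}. For an oriented graph $G$, let $G^{\leftarrow}$ denote its \emph{converse}, obtained by reversing every arc (loops are left unchanged). The key observation is that iot-injectivity is invariant under this operation: since $N^+_{G^{\leftarrow}}(x)=N^-_G(x)$ and $N^-_{G^{\leftarrow}}(x)=N^+_G(x)$, the set $N^+(x)\cup N^-(x)$ on which injectivity is imposed is \emph{the same} in $G$ and in $G^{\leftarrow}$. Consequently a map $\phi$ is an iot-injective homomorphism $G\to H$ if and only if it is an iot-injective homomorphism $G^{\leftarrow}\to H^{\leftarrow}$, and reflexivity of the target is preserved because loops map to loops. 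I would record this as a single reusable fact: for any reflexive tournament $S$, an oriented graph $G$ admits an iot-injective homomorphism to $S$ if and only if $G^{\leftarrow}$ admits one to $S^{\leftarrow}$.

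Next I would line up the hypotheses with those of Lemma~\ref{indiot}. If $v$ has in-degree at least four in $T$, then $v$ has out-degree at least four in $T^{\leftarrow}$, and the strict out-neighbourhood of $v$ in $T^{\leftarrow}$ is precisely the strict in-neighbourhood of $v$ in $T$. Hence the sub-tournament that Lemma~\ref{indiot} associates to $T^{\leftarrow}$ is exactly $(T')^{\leftarrow}$, the converse of the in-neighbourhood sub-tournament $T'$ appearing in the present statement. This structural matching is what lets me feed $T^{\leftarrow}$ into the already-proved lemma and recover the correct target after reversing back.

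The transformation is then a three-step composition. Given an instance $G$ of iot-injective $T'$-colouring, form $G^{\leftarrow}$, which by the duality fact is an equivalent instance of iot-injective $(T')^{\leftarrow}$-colouring. Apply Lemma~\ref{indiot} to the tournament $T^{\leftarrow}$, producing in polynomial time an oriented graph $C$ that is a yes-instance of iot-injective $T^{\leftarrow}$-colouring exactly when $G^{\leftarrow}$ is a yes-instance of iot-injective $(T')^{\leftarrow}$-colouring. Finally take $C^{\leftarrow}$, which by the duality fact is an equivalent instance of iot-injective $T$-colouring. Since reversal is computable in linear time and each step preserves the yes/no answer, the composition is the required polynomial transformation from iot-injective $T'$-colouring to iot-injective $T$-colouring.

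The only genuine content is the reversal-invariance of iot-injectivity, namely that the neighbourhood union $N^+(x)\cup N^-(x)$ is unchanged by reversing arcs, together with the bookkeeping that correctly identifies $(T^{\leftarrow})'$ with $(T')^{\leftarrow}$ so that the output of Lemma~\ref{indiot} lands on the intended target after the final reversal; I expect this matching of induced sub-tournaments to be the main place where care is needed. If instead an explicit construction is preferred (as the preceding sentence in the text suggests), I would mirror the construction of $C$ with every arc reversed: replace $T^\star$ by the graph obtained from $T$ by deleting the arcs with \emph{head} at $v$, reverse the arcs $v_iv_i^\star$ and the arcs joining the copies $T^\star_{i-1}$ to $T_i$, and then repeat Claims~1--3 verbatim with the words ``in-neighbour'' and ``out-neighbour'' interchanged. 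Both routes succeed, but the duality argument is shorter and less error-prone.
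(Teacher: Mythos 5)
Your proposal is correct, but it proves the lemma by a different route than the paper. The paper's own ``proof'' is the one-line remark preceding the statement: modify the construction of $C$ from Lemma~\ref{indiot} directly (your second, explicit route --- delete the arcs with head at $v$ to form $T^\star$, reverse the connecting arcs, and rerun Claims~1--3 with in- and out-neighbourhoods interchanged). Your primary argument instead isolates a clean duality: since $N^+_{G^{\leftarrow}}(x)\cup N^-_{G^{\leftarrow}}(x)=N^+_G(x)\cup N^-_G(x)$, a map is an iot-injective homomorphism $G\to H$ iff it is one $G^{\leftarrow}\to H^{\leftarrow}$, reflexivity survives reversal, and the bookkeeping identity $(T^{\leftarrow})'=(T')^{\leftarrow}$ (the strict out-neighbourhood of $v$ in $T^{\leftarrow}$ being its strict in-neighbourhood in $T$) lets you invoke Lemma~\ref{indiot} on $T^{\leftarrow}$ as a black box and reverse back; all three steps are polynomial, so the composition is the required transformation. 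All the ingredients check out, and the duality route buys you something the paper's sketch does not: you never re-verify the case analysis in Claims~1--3, and the same two-line fact simultaneously derives Lemma~\ref{indios2} from Lemma~\ref{indios}, since ios-injectivity is likewise converse-invariant ($N^+$ and $N^-$ merely swap roles, and injectivity on each separately is preserved). The cost is that your argument is only as self-contained as Lemma~\ref{indiot} itself, whereas the paper's implicit mirrored construction is standalone; but as a proof of this statement, yours is shorter, reusable, and less error-prone, and the one place where care was genuinely needed --- matching $(T^{\leftarrow})'$ with $(T')^{\leftarrow}$ so the final reversal lands on the intended target --- is exactly the point you flagged and handled correctly.
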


Similar to the case of ios-injective colouring, our results compile to give a dichotomy theorem.
\begin{theo}
	\label{iotdichot}
	Let $T$ be a reflexive tournament. If $T$ has at least $3$ vertices, then the problem of deciding whether a given oriented graph $G$ has an iot-injective homomorphism to $T$ is NP-complete. If $T$ has $1$ or $2$ vertices, then the problem is solvable in polynomial time.
\end{theo}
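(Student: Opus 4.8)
The plan is to mirror the proof of Theorem~\ref{iosdichot} almost verbatim, since every ingredient used there has an iot-analogue established earlier in this section. Membership in NP is immediate: given a candidate map $G \to T$, one checks in polynomial time that it is a homomorphism and that the injectivity condition on $N^+(v)\cup N^-(v)$ holds at each vertex. The case $|V(T)| \le 3$ is settled directly by Theorem~\ref{rus1}, which gives polynomial-time solvability for $1$ or $2$ vertices and NP-completeness for $3$. Thus the whole content is the NP-hardness of iot-injective $T$-colouring whenever $T$ is a reflexive tournament on at least four vertices, and I would prove this by induction on $|V(T)|$.

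For the base of the induction I would invoke the two NP-completeness theorems proved above, namely that iot-injective $T_4$-colouring and iot-injective $T_5$-colouring are NP-complete, together with the three-vertex targets $C_3$ and $T\!T_3$ handled by Theorem~\ref{rus1}. For the inductive step I first argue that every reflexive tournament $T$ with at least four vertices, other than $T_4$ and $T_5$, has a vertex of out-degree at least four or a vertex of in-degree at least four. When $n = |V(T)| \ge 6$ this follows from averaging: counting each loop, the total out-degree equals $\frac{n(n-1)}{2} + n$, so the average out-degree is $\frac{n-1}{2} + 1 > 3$, forcing some vertex to have out-degree at least four. For $n \in \{4,5\}$ I would appeal to the classification of the sixteen reflexive tournaments on four or five vertices: a vertex avoids both out-degree four and in-degree four precisely when, ignoring its loop, it has at most two out-arcs and at most two in-arcs, and a short case check shows that the only tournaments in which \emph{every} vertex meets this bound are the strongly connected $T_4$ (on four vertices, with strict score sequence $1,1,2,2$) and the regular $T_5$ (on five vertices, all strict scores equal to two).

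Granting such a vertex $v$, I would apply Lemma~\ref{indiot} when $v$ has out-degree at least four and Lemma~\ref{indiot2} when $v$ has in-degree at least four, to conclude that iot-injective $T'$-colouring polynomially transforms to iot-injective $T$-colouring, where $T'$ is the reflexive sub-tournament induced by the strict out-neighbourhood, respectively in-neighbourhood, of $v$. Because $v$ has degree at least four (counting its loop), $T'$ is a reflexive tournament on at least three vertices, and because $T'$ is a proper sub-tournament of $T$ we have $|V(T')| < |V(T)|$. By the induction hypothesis, iot-injective $T'$-colouring is NP-hard, so the reduction lifts this hardness to $T$, and together with membership in NP this completes the inductive step and hence the theorem.

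The only genuinely non-mechanical part is the degree argument, and within it the verification for $n \in \{4,5\}$. The averaging bound disposes of all $n \ge 6$ uniformly, so the main obstacle is confirming by inspection that $T_4$ and $T_5$ really are the unique exceptions among the sixteen small reflexive tournaments --- equivalently, that requiring every vertex to have at most two strict out-neighbours and at most two strict in-neighbours pins down exactly the strong tournament on four vertices and the unique regular tournament on five vertices. Once this is in hand the induction is well-founded: each reduction strictly decreases the vertex count while never dropping below three, so every reduction chain terminates at one of the known NP-hard base targets $C_3$, $T\!T_3$, $T_4$, or $T_5$.
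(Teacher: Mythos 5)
Your proposal is correct and takes essentially the same approach as the paper, which proves Theorem~\ref{iotdichot} by transplanting the proof of Theorem~\ref{iosdichot}: Theorem~\ref{rus1} for targets on at most three vertices, the iot-injective $T_4$- and $T_5$-colouring NP-completeness theorems as base cases, the averaging bound $\frac{n-1}{2}+1>3$ for $n\ge 6$ plus inspection of the sixteen reflexive tournaments on four or five vertices to find a vertex of in- or out-degree at least four, and repeated application of Lemmas~\ref{indiot} and~\ref{indiot2}. Your explicit induction framing, NP-membership check, and score-sequence verification of the $T_4$, $T_5$ exceptions are just a slightly more detailed packaging of the same argument.
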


\section{A note on irreflexive-injective homomorphisms}

No dichotomy theorem has emerged yet for iot-injective homomorphism, and hence ios-injective homomorphism, to irreflexive tournaments. The results of \cite{Rus1, Russell} tell us that the problem is not only solvable in polynomial time for the irreflexive tournaments on two vertices or less but also for the irreflexive tournaments on three vertices. Preliminary work suggests that the problem is solvable in polynomial time on two of the irreflexive tournaments on four vertices but NP-complete on the remaining two, and on many irreflexive tournaments on more vertices (including at least ten of the twelve irreflexive tournaments on five vertices). The problem has not been proven solvable in polynomial time on any irreflexive tournament on five vertices or more.

\nocite{*}
\bibliographystyle{abbrvnat}
\bibliography{sample-dmtcs}
\label{sec:biblio}

\end{document}